\newtheorem{theorem}{\textbf{Theorem}}
\newcommand*\circled[1]{\tikz[baseline=(char.base)]{
\node[fill = white, shape=circle,draw, minimum width=10pt, inner sep=0pt] (char) {#1};}}
\newcolumntype{L}[1]{>{\raggedright\let\newline\\\arraybackslash\hspace{0pt}}m{#1}}
\newcolumntype{C}[1]{>{\centering\let\newline\\\arraybackslash\hspace{0pt}}m{#1}}
\newcolumntype{R}[1]{>{\raggedleft\let\newline\\\arraybackslash\hspace{0pt}}m{#1}}
\begin{document}
\bstctlcite{IEEEexample:BSTcontrol}

\newcommand{\papertitle}[0]{ASSURE: RTL Locking Against an Untrusted Foundry}

\markboth{Accepted for publication in IEEE Transactions on Very Large Scale Integration (TVSLI) Systems, April~2021}{C. 
Pilato \MakeLowercase{\textit{et al.}}: ASSURE: RTL Locking Against an Untrusted Foundry}

\title{ASSURE: RTL Locking Against an\\ Untrusted Foundry}

\author{
Christian~Pilato,~\IEEEmembership{Senior~Member,~IEEE,}
Animesh~Basak~Chowdhury,~\IEEEmembership{Student~Member,~IEEE,}\\ Donatella~Sciuto,~\IEEEmembership{Fellow,~IEEE,}
Siddharth~Garg,~\IEEEmembership{Member,~IEEE,}
Ramesh~Karri,~\IEEEmembership{Fellow,~IEEE}
\thanks{Manuscript received October 11, 2020; revised January 28, 2021 and March 21, 2021; accepted April 6, 2021.
 
C. Pilato and D. Sciuto are with the Dipartimento di Elettronica, 
Informazione e Bioingegneria, Politecnico di Milano, Milano, Italy (christian.pilato@polimi.it, donatella.sciuto@polimi.it). 

A. B. Chowdhury, S. Garg and R. Karri are with the NYU Center for Cybersecurity 
(\url{http://cyber.nyu.edu}), New York University, New York, NY, USA (abc586@nyu.edu, garg@nyu.edu, rkarri@nyu.edu).
}}

\maketitle

\begin{abstract}
Semiconductor design companies are integrating proprietary intellectual property (IP) blocks to build custom integrated circuits (IC) and fabricate them in a third-party foundry. Unauthorized IC copies cost these companies billions of dollars annually. While several methods have been proposed for hardware IP obfuscation, they operate on the gate-level netlist, i.e., after the synthesis tools embed most of the semantic information into the netlist.  We propose ASSURE to protect hardware IP modules operating on the register-transfer level (RTL) description. The RTL approach has three advantages: (i) it allows designers to obfuscate IP cores generated with many different methods (e.g., hardware generators, high-level synthesis tools, and pre-existing IPs); (ii) it obfuscates the semantics of an IC before logic synthesis; (iii) it does not require modifications to EDA flows. We perform a cost and security assessment of ASSURE against state-of-the-art oracle-less attacks.
\end{abstract}

\IEEEpeerreviewmaketitle

\section{Introduction}

The cost of IC manufacturing has increased 5$\times$ when scaling from 90nm to 7nm~\cite{iccost}. An increasing number of design houses are now {\em fab-less} and outsource the fabrication to a third-party foundry~\cite{fabless,manufacturing}. 
This reduces the cost of operating expensive foundries but raises security issues. If a rogue in the third-party foundry has access to the design files, they can reverse engineer the IC functionality to steal the Intellectual Property (IP), causing economic harm to the design house~\cite{pieee14}. 

Fig.~\ref{fig:design_flow} is a fabless IC design flow with third-party manufacturing. The flow accepts the specification in a {\em hardware description language} (HDL). Designers create the components either manually or generate them automatically, and integrate them into a hardware description at the register-transfer level (RTL).  Given a technology library (i.e., a description of gates in the target technology) and a set of constraints, logic synthesis elaborates the RTL into  a gate-level netlist. Logic synthesis applies optimizations to reduce area and improve timing. 
While RTL descriptions are hard to match against high-level specifications~\cite{8000621}, they are used as a golden reference during synthesis to verify each step does not introduce any error. Physical design generates the layout files that are sent to the foundry for fabrication of ICs that are then returned to the design house for packaging and testing. Assuming a trusted design house, the foundry is the first place where a malicious attacker can reverse engineer and replicate an IC.

\begin{figure}[!tp]
\centering
\includegraphics[width=\columnwidth]{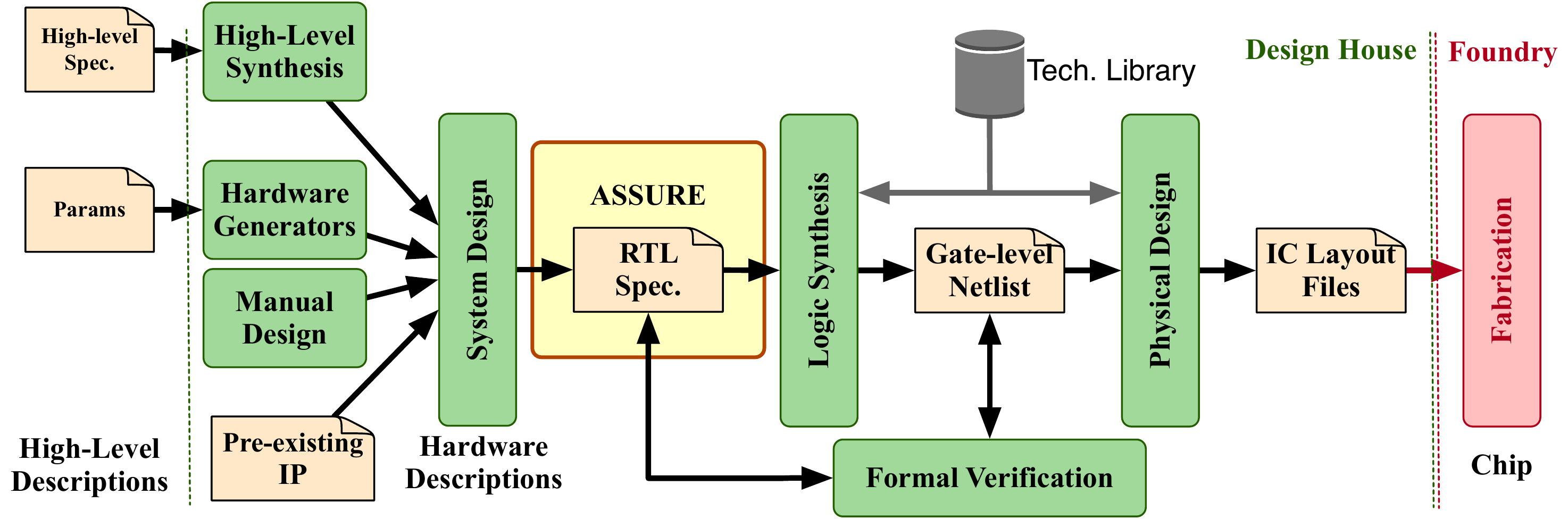}
\caption{State-of-the-art IC design flow. 
Designers create RTL description of an IC either by  manual design, or by using HLS tools, or by using hardware generators. The netlist after more processing steps is sent to a third-party foundry. ASSURE locks an RTL description before logic synthesis. }
\label{fig:design_flow}	
\end{figure}

Semiconductor companies are developing methods for IP obfuscation. 
In {\em split manufacturing}, the design house splits the IC into parts that are fabricated by different foundries~\cite{6513707}. An attacker must access all parts to recover the IC. While the design process becomes more complex, the designers cannot guarantee complete security. 
{\em Watermarking} hides a signature inside the circuit, which is later verified during litigation~\cite{abdel2003ip}. Finally, designers apply {\em logic locking}~\cite{10.1145/3342099} to prevent unauthorized copying and thwart reverse-engineering. They introduce extra gates controlled by a key that is kept secret from the foundry. They {\em activate} the IC functionality by installing the key into a tamper-proof memory after fabrication.

\subsection{Related Work}\label{sec:related}

Logic locking is a popular technique to protect the intellectual property of ICs~\cite{llcarxiv}. Designers can apply logic locking at different abstraction levels and configure the protection based on the information available to the attacker~\cite{10.1145/3342099}.

Many existing methods operate on the gate-level netlists~\cite{llcarxiv}. Gate-level locking can not obfuscate all the semantic information because logic synthesis and optimizations absorb much of it into the netlist before the locking step. For example, constant propagation absorbs and propagate the constants. Our method completely strips the constants from the circuit before synthesis. Recently, alternative high-level locking methods obfuscate the semantic information before logic optimizations embed them into the netlist~\cite{dac18,rtlOzgur}. For example, TAO applies obfuscations during HLS~\cite{dac18} but requires access to the HLS source code to integrate the obfuscations and cannot obfuscate existing IPs. Protecting a design at the register-transfer level (RTL) is an interesting compromise~\cite{obfuscateDSPRTL,5401214}. Most of the semantic information (e.g., constants, operations and control flows) is still present in the RTL and obfuscations can be applied to existing RTL IPs. In~\cite{obfuscateDSPRTL}, the authors propose structural and functional obfuscation for DSP circuits. We propose a more general method that can be applied to any type of circuit. In~\cite{5401214}, the authors propose a method to insert a special finite state machine to control the transition between obfuscated mode (incorrect function) and normal mode (correct function). Such transition can only happen with a specific input sequence. We use a similar method to obfuscate the operations without additional logic (and power-up overhead) to make the circuit functional in normal mode. To obfuscate the semantic information, ASSURE leverages prior work on software program obfuscation~\cite{collberg91,BEHERA2015757,xu2017}. These methods obfuscate data structures, control flows, and constants through code transformations or by loading information from memory at runtime. We use a similar approach to create {\em opaque predicates} dependent only on the locking key~\cite{5401214}.

When the attackers have access only to the circuit netlist (like in the early stages of the fabrication process), they need to identify the correct variant among the ones created by the locking key. Redundancy attacks can recover part of the key bits for several locking methods~\cite{8714955}, while machine learning can predict the key bit values based on the structure of the circuit~\cite{sisejkovic2020challenging}. However, such attacks cannot recover what is not present in the circuit, like extracted constants, and cannot distinguish semantically equivalent variants. 
When the attackers can access also an activated IC (i.e. the oracle)~\cite{metrics2018,Shen:2019:BBS:3287624.3287670}, they can use Boolean Satisfiability (SAT)-based attacks to recover the key. Several solutions have been proposed to thwart SAT-based attacks~\cite{8279462,Yasin:2017:PLL:3133956.3133985}. For example, stripped-functionality logic locking (SFLL) extracts part of the functionality, which is hidden and restored upon the application of the correct key~\cite{Yasin:2017:PLL:3133956.3133985}. SFLL-HLS is the corresponding HLS-level extension~\cite{8942150}. However, complete protection is not guaranteed as attacks on SFLL have been reported when the ``protected'' functional inputs are at a certain Hamming distance from the key~\cite{sfll19,Deepak19}. Also, many SAT-resilient protections, like SARLock and SFLL, can be broken even with oracle-less attacks~\cite{8714955}. So, the approaches for the different threat models are complementary and must be combined to obtain multi-level protection. 

In this work, we aim at avoiding the attackers can recover the circuit with modern oracle-less attacks. {\em\uline{We base our techniques on the concept of {\bf indistinguishability}: all Boolean functions generated by a locking key have the same probability of being the correct circuit}}. 
So, netlist-only attacks are not able to identify and rule out ``incorrect'' designs.

\subsection{Paper Contributions}

ASSURE RTL obfuscation uses three techniques to obfuscate constants, arithmetic operations, and control branches. ASSURE provides the following contributions with respect to the state-of-the-art approaches: 
\begin{itemize}[leftmargin=*,topsep=0pt,noitemsep]
\item the three ASSURE techniques are complete and provably secure for creating {\em indistinguishable RTL designs} with no limitations on the input descriptions to be protected;
\item ASSURE can provide {\em multi-level security} together with oracle-based protections (e.g., scan-chain isolation~\cite{9214869});
\item ASSURE is a {\em technology-independent tool} that is fully compatible with existing EDA design flows and leaves complete control to the designer on the obfuscation process.
\end{itemize}
We describe our RTL-to-RTL translation framework in Section~\ref{sec:assure}, along with security proofs of obfuscations for constants, operations, and branches (Section~\ref{sec:techniques}). We also assess security against state-of-the-art oracle-less attacks (Section~\ref{sec:security}) and evaluate the related overhead (Section~\ref{sec:cost}).

\section{Threat Model: Untrusted Foundry}\label{sec:threat_model}
The state-of-art in logic locking considers two broad categories of threat models: netlist-only and oracle-guided~\cite{10.1145/3342099,8741035}. In both settings, the attacker has access to a locked netlist, but in the latter, also to an unlocked IC (\textit{oracle}) to analyze input/output relationships. 
The netlist-only model applies for an untrusted foundry that accesses the IC design for the first time. It also captures low-volume settings -- e.g., the design of future defense systems with unique hardware requirements~\cite{security} -- where the attacker would not reasonably be able to access a working copy of the IC. 
Consider, for instance, a fab-less defense contractor that outsources the fabrication of an IC to an {\bf untrusted foundry}. 
The untrusted foundry has access to the layout files of the design and can reverse engineer a netlist and even extract the corresponding RTL~\cite{7100906}. However, since the foundry produces the first ever batch of an IC design (in some cases the only one), an activated chip is not available through any other means. Attacks that rely on knowledge of an IC's true I/O behaviour (e.g., SAT attacks) cannot be applied and are therefore out-of-scope. However, the attacker can still rely on a range of netlist-only attacks, \textit{desynthesis}~\cite{DBLP:journals/corr/MassadZGT17}, \textit{redundancy identification}~\cite{8714955} and \textit{ML-guided structural and functional analysis}~\cite{sail,surf}, for instance, to recover the key bits and reverse engineer the locked netlist. In the following, we prove the resilience of ASSURE obfuscation to not only these three attacks, but also that ASSURE locked netlists reveal \emph{no information} about the design other than any prior knowledge that the designer might have about the design.
In the oracle-guided model, the attackers need to get an unlocked IC from the market -- e.g., because of high-volume commercial fabrication -- to analyze I/O relationships and apply the corresponding attacks. With our method, we thwart attacks that are successful for oracle-guided protections even without activated IC.

\begin{figure}
\centering
\includegraphics[width=\columnwidth]{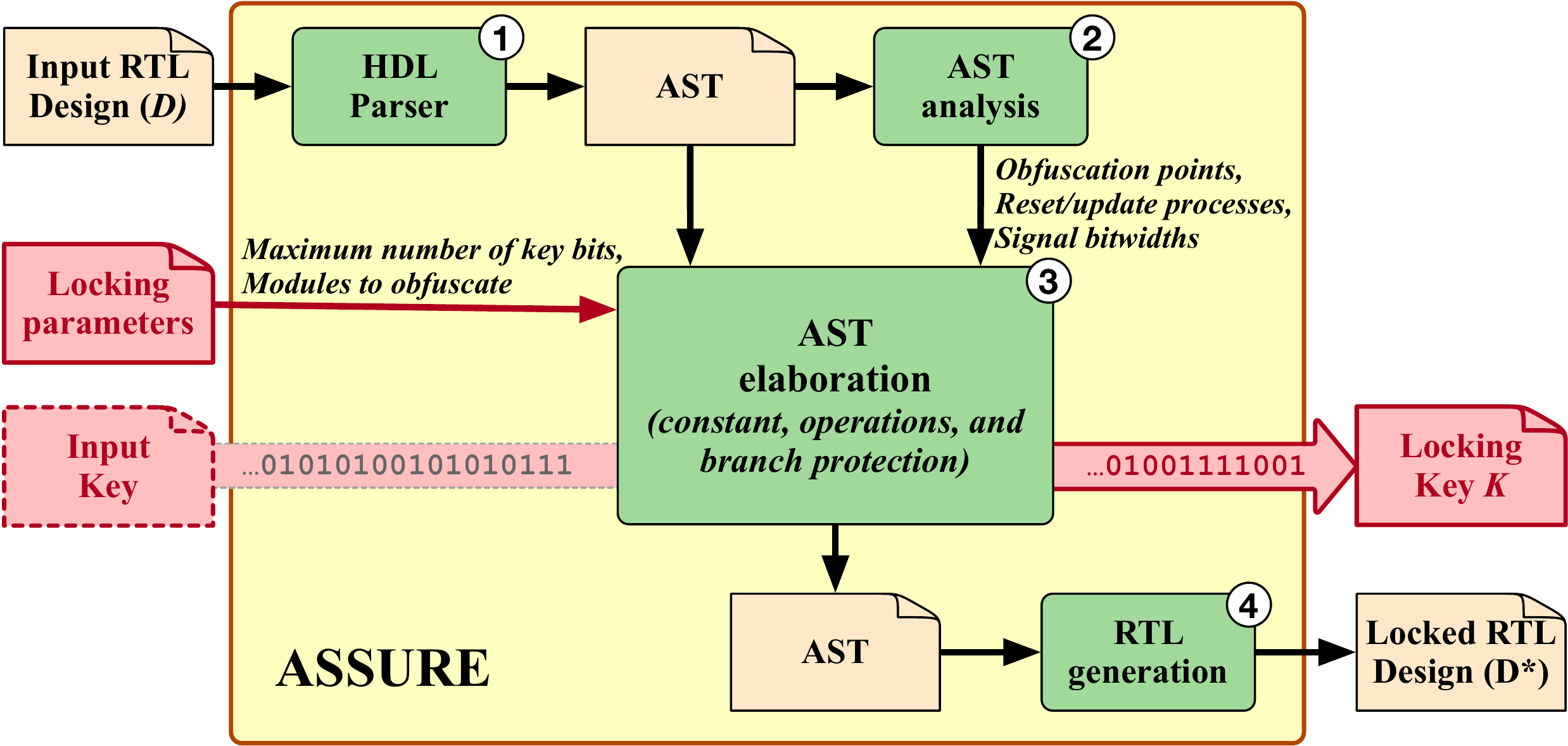}
\caption{Organization of ASSURE.}\label{fig:assure}
\end{figure}

\section{Overview of ASSURE}\label{sec:assure}
Fig.~\ref{fig:assure} shows the ASSURE flow. Given an RTL design~$D$ and a set of obfuscation parameters, ASSURE generates a design $D^*$ together with a single locking key $\mathcal{K}^{*}_r$ such that $D^*$ matches the functionality of $D$ only when $\mathcal{K}^{*}_r$ is applied. 
ASSURE is a technology-independent and operates on the RTL after system integration but before logic synthesis. ASSURE obfuscates existing IPs and those generated with commercial HLS tools. Even if logic locking is a hardware approach, obfuscating RTL code has analogies with {\em program obfuscation} to protect the software IP~\cite{collberg91,xu2017}. In both cases, the designer aims at obfuscating the semantic information contained into the design rather than its structure~\cite{5401214}.
ASSURE obfuscates the RTL by adding in {\em opaque predicates} such that the  evaluation of the opaque predicates depends on the locking key; their values are known to the designer during obfuscation, but unknown to the untrusted foundry.  ASSURE obfuscates three {\bfseries semantic elements} useful to replicate function of an IP:
\begin{itemize}[leftmargin=*,topsep=0pt,noitemsep]
\item {\em constants} contain sensitive information in the computation (e.g., filter coefficients).
\item {\em operations} determine functionality.
\item {\em branches} define the execution flow (i.e., which operations are executed under specific conditions).
\end{itemize}
ASSURE parses the input HDL and creates the {\em abstract syntax tree} (AST) -- step \circled{1}. It then analyzes the AST to select the semantic elements to lock (step~\circled{2}) and obfuscates them ({\em AST elaboration} -- step~\circled{3}). The {\em RTL generation} phase (step~\circled{4}) produces the output RTL design that has the same external interface as the original module, except for an additional input port that is connected to the place where $\mathcal{K}^{*}_r$ is stored. 
ASSURE starts from a synthesizable IP and  modifies its description, it fits with existing EDA flows and same constraints as the original, including tools to verify that resulting RTL is equivalent to the original design when the correct key is used and to verify that it is not equivalent to the  original when an incorrect key is used. 

The key idea of ASSURE is that the functionality of $D^*$ is much harder to understand without the parameter~$\mathcal{K}^{*}_r$. If the attackers apply a  key different from $\mathcal{K}^{*}_r$ to $D^*$, they obtain plausible but wrong circuits, indistinguishable from the correct one. These variants are {\it indistinguishable} from one another without  a-priori knowledge of the design. 

\subsection{ASSURE Obfuscation Flow}\label{sec:assure_analysis}
To generate an obfuscated RTL design, we must match the requirements of the IP design with the constraints of the technology for storing the key (e.g., maximum size of the tamper-proof memory). On one hand, the number of bits needed to obfuscate the semantics of an RTL design depends on the complexity of the algorithm to protect. On the other hand, the maximum number of key bits that can be used by ASSURE ($K_{max}$) is a design constraint that depends on the technology for storing them in the circuit. ASSURE analyzes the input design to identify which modules and which circuit elements in modules must be protected. First, ASSURE does {\bf depth-first analysis} of the design to {\em uniquify} the module hierarchy and creates a list of modules to process. In this way, {\em\uline{ASSURE hides the semantics of the different modules so that extracting knowledge from one instance does not necessarily leak information on all modules of the same type}}.

\begin{algorithm}[t!]
	\footnotesize
	\SetKwFunction{ObfuscateModule}{ObfuscateModule}
	\SetKwFunction{CreateBlackList}{CreateBlackList}
	\SetKwFunction{DepthFirstAST}{DepthFirstAST}
	\SetKwFunction{BitReq}{BitReq}
	\SetKwFunction{Obfuscate}{Obfuscate}
	\SetKwFunction{GetObfuscationKey}{GetObfuscationKey}
	\SetKwFunction{KeyLength}{KeyLength}
	\SetKwProg{obfuscateModule}{Procedure}{}{}
	
	\caption{ASSURE obfuscation.\label{alg:obfuscate}}
	\obfuscateModule{\ObfuscateModule{$AST_m$, $\mathcal{K}^*_r$, $K_{max}$}}{
		\KwData{$AST_m$ is the AST of the module $m$ to obfuscate}
		\KwData{$\mathcal{K}^*_r$ is the current locking key}
		\KwData{$K_{max}$ is the maximum number of key bits to use}
		\KwResult{$AST^*_m$ is the obfuscated AST of the module $m$}
		\KwResult{$\mathcal{K}^*_r$ is the updated locking key}		
		
		$\mathtt{BlackList} \gets $ \CreateBlackList{$AST_m$}\;
		$AST^*_m \gets \mathtt{BlackList}$\;
		$\mathtt{ObfElem} \gets $ \DepthFirstAST{$AST_m$}~$\setminus$~$\mathtt{BlackList}$\;
		\ForEach{$el \in \mathtt{ObfElem}$}{
		   $b_{el} \gets $ \BitReq{$el$}\;
		   \eIf {\KeyLength{$\mathcal{K}^*_r$}$ + b_{el} > K_{max}$}{
		      $AST^*_m \gets AST^*_m \cup el$\;
		   } {
		      $K_{el} \gets $ \GetObfuscationKey{$el$}\;
		      $AST^*_m \gets AST^*_m \cup $ \Obfuscate{$el$, $K_{el}$}\;
		      $\mathcal{K}^*_r \gets \mathcal{K}^*_r \cup K_{el}$\;
           }
	}
	\Return{$\{AST^*_m,\mathcal{K}^*_r\}$}
}
\end{algorithm}

After uniquifying the design, ASSURE analyzes the AST of each module with Algorithm~\ref{alg:obfuscate} starting from the innermost ones. Given a hardware module, ASSURE first creates a ``black list'' of the elements that must be excluded from obfuscation (line 2). For example, the black list contains elements inside reset and update processes or loop induction variables (see Section~\ref{sec:techniques}). The designer can also annotate the code to specify that specific regions or modules must be excluded from obfuscation (e.g., I/O processes or publicly-available IPs). The black-list elements are added unchanged to the output AST (line 3). Finally, ASSURE determines the list of AST elements to obfuscate (line 4) and process them (lines 5-12). The resulting list \texttt{ObfElem} follows the visit order of the depth-first search. For each element, ASSURE computes the number of bits required for obfuscation (line 6) and check if there are enough remaining key bits (line 7). If not, ASSURE does not obfuscate the element (line 8). Indeed, reusing a key bit across multiple elements as in~\cite{dac18} reduces the security strength of our scheme because extracting the key value for one element invalidates the obfuscation of all others sharing the same key bit. If the obfuscation is possible (lines 9-12), ASSURE generates the corresponding key bits to be added to $\mathcal{K}^*_r$ (line 10). These bits depend on the specific obfuscation technique to be applied to the element and can be randomly generated, extracted from an input key (see Fig.~\ref{fig:assure}), or extracted from the element itself (see Section~\ref{sec:techniques}). ASSURE uses these key bits to obfuscate the element and the result is added to the output AST (line 11). The key bits are also added to the output locking key (line 12). We repeat this procedure for all modules until the top, which will return the AST of the entire design and the final key. This approach leaves full control to the designers that can explore trade-offs by providing constraints on the number of bits and combining the depth-first analysis with the annotations to exclude elements from obfuscation.

\subsection{ASSURE Obfuscations and Security Proofs}\label{sec:techniques}

Each of the ASSURE techniques targets an essential element to protect and uses a distinct part of the $r$-bit locking key $\mathcal{K}^{*}_r$, to create an opaque predicate\footnote{We use Verilog notation in the examples, but the approach is general.}. In software, an {\em opaque predicate} is a predicate for which the outcome is certainly known by the programmer, but requires an evaluation at run time~\cite{collberg91}. We create {\bf hardware opaque predicates}, for which the outcome is determined by ASSURE (and so known) at design time, but requires to provide the correct key at run time. Any predicate involving the extra parameter $\mathcal{K}^{*}_r$ meets this requirement. 
Given a locking key $\mathcal{K}^{*}_r$, {\em\uline{ASSURE generates a circuit \textbf{indistinguishable} from the ones  generated with any other $\mathcal{K}_r\neq \mathcal{K}^{*}_r$ when the attacker has no prior information on the design}}. 

We show that ASSURE techniques offer provable security guarantees~\cite{DBLP:journals/corr/MassadZGT17}. 
Consider an $m$-input $n$-output Boolean function 
$\mathcal{F}$ : $X \rightarrow Y$, where $X \in \{0,1\}^{m}$ and $Y \in \{0,1\}^{n}$. 

{\bf\noindent Definition} Obfuscation $\mathcal{O}(\mathcal{K}^{*}_r)$ transforms $\mathcal{F}$ into an $m+r$-input $n$-output function $\mathcal{L}_{\mathcal{K}^*_r}$ defined as:
\begin{equation}
    \mathcal{L}_{\mathcal{K}^*_r}(X,K) = \mathcal{O}(\mathcal{F}(X),\mathcal{K}^*_r)
\end{equation}
where 
$\mathcal{L}_{\mathcal{K}^*_r}$: $X \bigtimes K \rightarrow Y$ and $K \in \{0,1\}^{r}$ are such that
\begin{itemize}
\leftskip =-10pt
\rightskip =-10pt
    \item $\mathcal{L}_{\mathcal{K}^*_r} (X,\mathcal{K}^{*}_r) = \mathcal{F}_{\mathcal{K}^*_r}(X) = \mathcal{F}(X)$
    \item $\mathcal{L}_{\mathcal{K}^*_r} (X,\mathcal{K}_r) = \mathcal{F}_{\mathcal{K}_r}(X) \neq$ $\mathcal{F}(X)$ when $\mathcal{K}_r\neq \mathcal{K}^{*}_r$
\end{itemize}

This definition shows $\mathcal{C}_{\mathcal{K}^*_r}$ represents a family of Boolean functions $\{\mathcal{F}_{K_r}\}$ based on the generic $r$-bit key input $\mathcal{K}_r$. The functionality $\mathcal{F}(X)$ can be re-obtained uniquely with the correct key $\mathcal{K}^*_r$. This is followed by a corollary about a characteristic of the family of Boolean functions  $\mathcal{L}_{\mathcal{K}^*_r}$.
\begin{theorem} 
For an obfuscated netlist $L_{\mathcal{K}^*_r}(X,K)$ created from $\mathcal{F}(X)$ with obfuscation $\mathcal{O}(\mathcal{K}^{*}_r)$, the unlocked functions $\mathcal{F}_{K_1}$ and $\mathcal{F}_{K_2}$ (for keys $K_1$ and $K_2$) relate as follows:
\begin{gather}
    \mathcal{F}_{K_1} \neq \mathcal{F}_{K_2} \forall K_1, K_2 \in K: K_1 \neq K_2
\end{gather}
\end{theorem}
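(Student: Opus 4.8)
The statement is equivalent to the injectivity of the map $\Phi:K\mapsto\mathcal{F}_K$ from the key space $\{0,1\}^r$ into the set of $m$-input $n$-output Boolean functions. The definition preceding the theorem supplies only the weaker separation $\mathcal{F}_{\mathcal{K}_r}\neq\mathcal{F}$ for every $\mathcal{K}_r\neq\mathcal{K}^*_r$, i.e. it distinguishes the single correct key from all others; the theorem demands distinctness for \emph{every} pair of keys. The plan is to exploit the element-wise structure of $\mathcal{O}$ to promote separation-from-the-correct-key into separation-between-all-keys. First I would recall from Algorithm~\ref{alg:obfuscate} that the $r$ bits are partitioned into disjoint blocks, one block $K_e$ per obfuscated element $e$ (the algorithm explicitly forbids reusing a bit across elements), so a key is a tuple $K=(K_{e_1},\dots,K_{e_t})$ and $\mathcal{F}_K$ is obtained by instantiating each element from its own block independently of the rest.

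Next I would show that each block acts injectively on the realized function. For a constant the block is substituted verbatim as the constant value, so distinct blocks realize distinct constants; because the blacklist guarantees the element is live---it genuinely drives some output---the two instantiations must disagree on at least one input. The corresponding statements for the operation and branch obfuscations are exactly the per-technique facts established in Section~\ref{sec:techniques}, which I would invoke to conclude that toggling any single block changes the output on some input while the other blocks are held fixed.

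The two pieces then assemble directly. Given $K_1\neq K_2$ they differ in at least one block, say the one owned by element $e$; fixing the remaining blocks and varying only $K_e$ changes the output on some $X$ by the previous step, so $\mathcal{F}_{K_1}(X)\neq\mathcal{F}_{K_2}(X)$ and hence $\mathcal{F}_{K_1}\neq\mathcal{F}_{K_2}$. Since $K_1,K_2$ were arbitrary, $\Phi$ is injective, which is the claim.

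The main obstacle is the per-block injectivity, not the bookkeeping around it. The abstract definition certifies only that a wrong value differs from the designer's chosen correct value, so ruling out collisions between two wrong values requires that each technique's key-to-realization map be injective and free of accidental Boolean equivalences (two distinct constants computing the same output on all inputs, or an inert element whose bits never propagate). This is why the liveness guarantee from the blacklist and the explicit encodings of Section~\ref{sec:techniques} are indispensable. A slicker alternative is a relabeling argument: the locked circuit $\mathcal{L}$ is intrinsic and $\mathcal{K}^*_r$ only names the unlocking input, so re-deriving the obfuscation with $K_1$ declared correct produces the same $\mathcal{L}$ and the definition's second clause immediately yields $\mathcal{F}_{K_2}\neq\mathcal{F}_{K_1}$; but this shortcut is valid only for substitution-style obfuscations and still rests on the same no-collision property.
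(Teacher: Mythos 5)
Your main argument has a genuine gap at the assembly step. Per-block injectivity---toggling a single block $K_e$ while \emph{all other blocks are held fixed} changes the realized function---only disposes of key pairs that differ in exactly one block. When $K_1$ and $K_2$ differ in two or more blocks you cannot ``fix the remaining blocks,'' since those blocks themselves differ between the two keys; and a hybrid walk from $K_1$ to $K_2$ one block at a time does not close the argument, because function inequality is not transitive and single-block differences can cancel. Concretely: if two extracted constants are only ever combined as $c_1 + c_2$ (or feed any subcircuit symmetric in the pair), then the keys $(a,b,\ldots)$ and $(b,a,\ldots)$ realize the \emph{same} Boolean function even though each block is live and substituted injectively. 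This is precisely the collision phenomenon the paper itself concedes in practice---dummy operators are ``selected to avoid collisions,'' and the \textsc{key effect} experiments uncovered a real collision in {\tt DFT} from width-mismatched constants that forced a tool fix---so no argument built purely from the local structure of the three techniques can establish the theorem; a global no-cancellation hypothesis has to be assumed somewhere.

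Ironically, the paper's actual proof is the relabeling argument you relegate to your closing paragraph as a ``slicker alternative.'' It splits into two cases: (i) $K_1 = \mathcal{K}^*_r$, where $\mathcal{F}_{K_1} \neq \mathcal{F}_{K_2}$ for all $K_2 \neq K_1$ is immediate from the wrong-key clause of the definition; and (ii) $K_1 \neq \mathcal{K}^*_r$, where the same locked netlist $\mathcal{L}_{\mathcal{K}^*_r}$ is reinterpreted as a locking of $\mathcal{F}_{K_1}$ with $K_1$ declared correct, and the wrong-key clause is invoked again to conclude $\mathcal{F}_{K_2} \neq \mathcal{F}_{K_1}$ for all $K_2 \neq K_1$. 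Your diagnosis of that shortcut is accurate: it silently assumes the scheme is key-symmetric (every key is the correct key of some valid locking yielding the same netlist), a property not contained in the stated definition and equivalent in substance to the no-collision hypothesis your main route lacks. Had you led with the relabeling argument and stated key-symmetry as an explicit assumption, you would have reproduced the paper's proof with its hidden premise made honest; as written, your primary argument fails on key pairs differing in more than one block.
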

\begin{proof}
Let us first consider case (i) $K_1 = \mathcal{K}^*_r$. Therefore, by the definition of RTL obfuscation scheme $\mathcal{O}$, $\mathcal{F}_{K_1} \neq \mathcal{F}_{K_2} \forall K_2 \in K, K_1 \neq K_2$. Now, for case (ii) $K_1 \neq  \mathcal{K}^*_r$, there exists a locked netlist $\mathcal{L}_{\mathcal{K}^*_r}$ that locked $\mathcal{F}_{K_1}$ using $K_1$. Therefore, $\mathcal{F}_{K_2}$ = $\mathcal{L}_{\mathcal{K}^*_r}(X,\mathcal{K}_2)$. By the definition of logic locking security, $\mathcal{F}_{K_2} \neq \mathcal{F}_{K_1} \forall K_2 \neq K_1$ in $\mathcal{L}_{\mathcal{K}^*_r}(X,\mathcal{K}_r)$.
\end{proof}

\begin{figure*}
     \centering
     \subfloat[][Constant obfuscation]{\includegraphics[height=2.5cm]{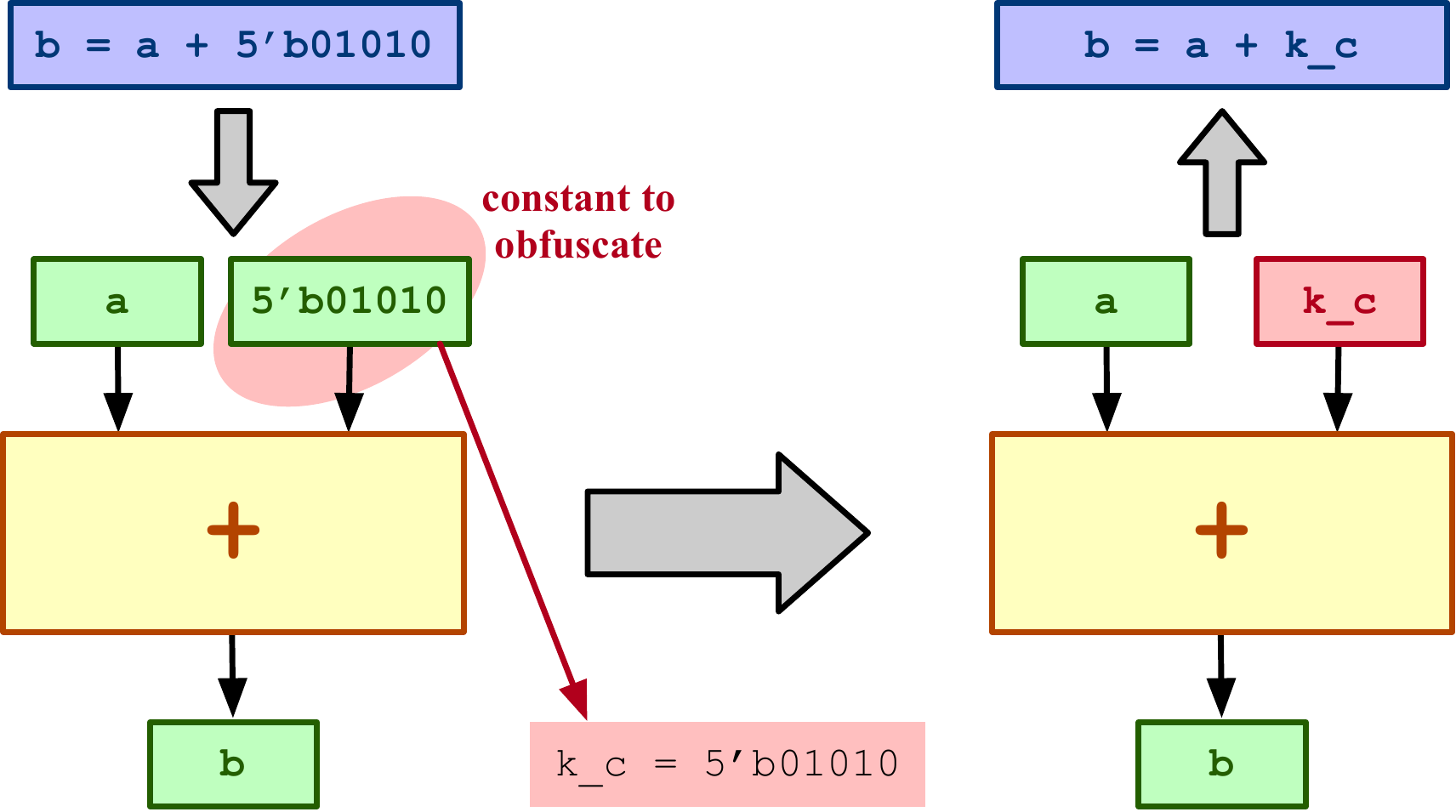}\label{fig:const_obfuscation}}
     \hfill
     \subfloat[][Operation obfuscation]{\includegraphics[height=2.5cm]{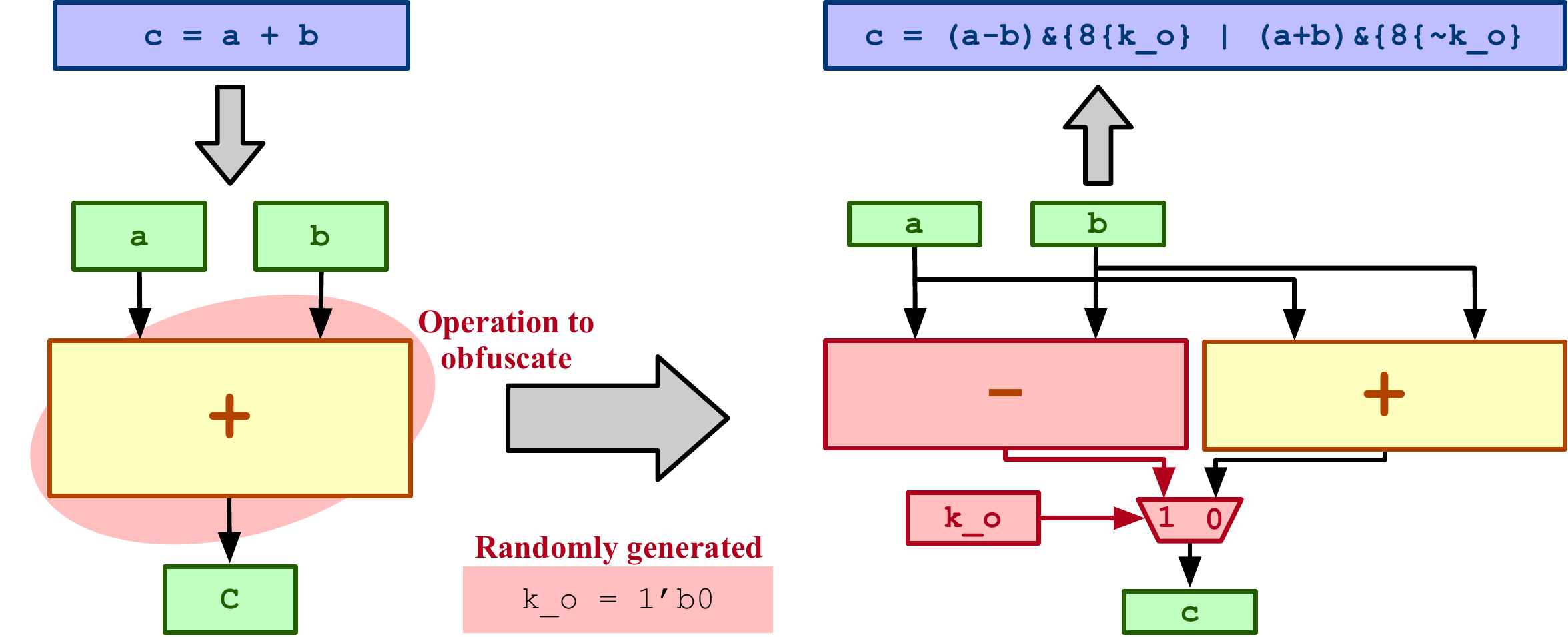}\label{fig:op_obfuscation}}
     \hfill
     \subfloat[][Branch obfuscation]{\includegraphics[height=2.5cm]{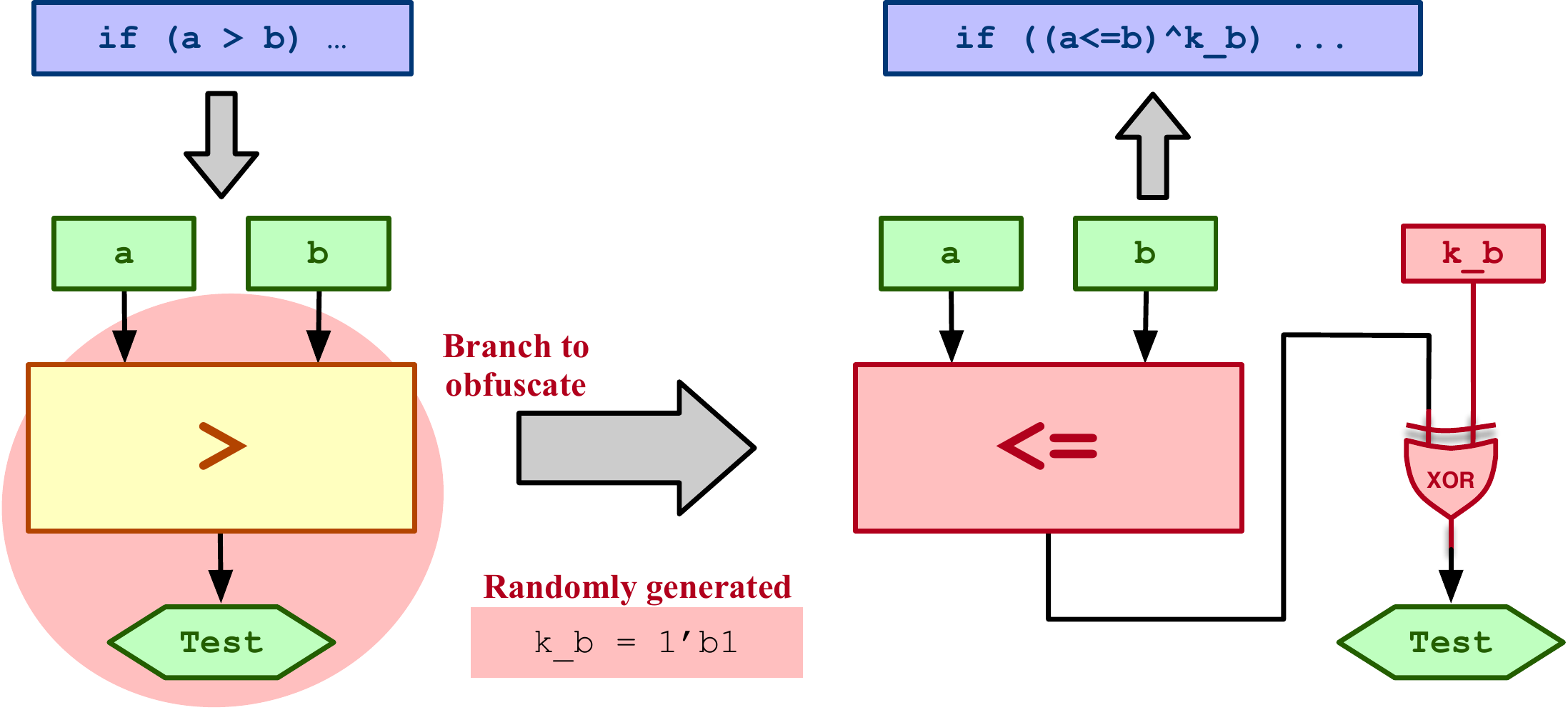}\label{fig:branch_obfuscation}}
     \caption{Three ASSURE obfuscations. (a) Constant, (b) Operation, and (c) Branch. Each obfuscation uses a portion of the key.}
     \label{fig:techniques}
\end{figure*}

We define $P[\mathcal{L}_{\mathcal{K}_r} | \mathcal{O}(\mathcal{F}(X),\mathcal{K}]$ as the probability of obtaining the locked design $\mathcal{L}_{\mathcal{K}_r}$ given that we locked the Boolean function $\mathcal{F}(X)$ applying $\mathcal{O}$ with $\mathcal{K}$. The RTL locking scheme $\mathcal{O}$ is secure under the netlist-only threat model:
\begin{theorem}\label{theo:strong}
A logic locking scheme $\mathcal{O}$ for $r$-bit key $K$ is secure for a family of Boolean functions $\mathcal{F}_{K_r}$ of cardinality $2^r$ if the following condition holds true:
\begin{gather}     
P[\mathcal{L}_{\mathcal{K}'_r} | \mathcal{O}(\mathcal{F}(X),\mathcal{K}^*_r)] = P[\mathcal{L}_{\mathcal{K}'_r} |\mathcal{O}(\mathcal{F}_{K_r}(X),\mathcal{K}_r)]\nonumber \\ \forall \mathcal{K}_r \neq \mathcal{K}^*_r, \mathcal{F}(X) \neq \mathcal{F}_{K_r}(X)
\end{gather}
\end{theorem}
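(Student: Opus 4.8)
The plan is to prove the stated implication directly: assuming the likelihood-equality hypothesis, I would show that the netlist-only attacker's posterior distribution over the candidate functions is uniform, which is exactly the \emph{indistinguishability} notion of security that the paper adopts. First I would fix the attacker's model precisely. The attacker observes a single locked netlist $\mathcal{L}_{\mathcal{K}'_r}$ and entertains $2^r$ mutually exclusive hypotheses, one per key $\mathcal{K}_r \in \{0,1\}^r$, where hypothesis $\mathcal{K}_r$ asserts that the observed netlist was produced by locking the associated function $\mathcal{F}_{K_r}$ with $\mathcal{K}_r$. Theorem~1 guarantees the $\mathcal{F}_{K_r}$ are pairwise distinct, so these hypotheses correspond to genuinely different circuits and the candidate set has the claimed cardinality $2^r$.

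Next I would set up a Bayesian computation of the posterior $P[\mathcal{K}_r \text{ correct} \mid \mathcal{L}_{\mathcal{K}'_r}]$. Because the attacker has no prior information on the design (the netlist-only setting), the prior over the $2^r$ candidates is uniform, equal to $2^{-r}$. By Bayes' rule the posterior is proportional to the likelihood $P[\mathcal{L}_{\mathcal{K}'_r} \mid \mathcal{O}(\mathcal{F}_{K_r}(X),\mathcal{K}_r)]$ times this uniform prior. Here I would invoke the theorem's hypothesis: each wrong-key likelihood equals the likelihood under the true pair $\mathcal{O}(\mathcal{F}(X),\mathcal{K}^*_r)$, and, together with the trivial equality for the correct key itself, all $2^r$ likelihoods share one common value $p$.

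The key step is then the normalization. The marginal $P[\mathcal{L}_{\mathcal{K}'_r}] = \sum_{\mathcal{K}_r} P[\mathcal{L}_{\mathcal{K}'_r} \mid \mathcal{O}(\mathcal{F}_{K_r}(X),\mathcal{K}_r)] \cdot 2^{-r}$ collapses to $2^r \cdot p \cdot 2^{-r} = p$, since the $2^r$ likelihoods are all equal to $p$. Substituting back, the posterior equals $(p \cdot 2^{-r})/p = 2^{-r}$ for every candidate. Thus the posterior coincides with the uniform prior: observing $\mathcal{L}_{\mathcal{K}'_r}$ conveys no information about which function is correct, the netlist is statistically independent of the correct key, and the attacker can do no better than guessing. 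This is precisely the indistinguishability requirement, so I would conclude that $\mathcal{O}$ is secure under the netlist-only model.

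I expect the main obstacle to be conceptual rather than computational: pinning down the security definition so that the hypothesis's likelihood-equality lines up exactly with the Bayesian likelihoods, and justifying that ``no prior information'' licenses the uniform prior. A secondary care point is the consistent handling of the $\mathcal{L}_{\mathcal{K}'_r}$ notation — the observed netlist must be treated as a fixed realization while the conditioning event ranges over the $2^r$ generative hypotheses — and ensuring that the cardinality-$2^r$ assumption, supplied by the distinctness guarantee of Theorem~1, is what makes the summation collapse cleanly.
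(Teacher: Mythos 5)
Your argument is correct, but note that it proves something the paper never actually argues: the paper offers no proof of Theorem~\ref{theo:strong} at all. It states the likelihood-equality condition as its security criterion, follows it with a one-sentence gloss (``any locking key \dots is equally probable to generate the locked netlist \dots all having the same probability to be the original Boolean function''), and spends its proof effort elsewhere --- showing, per technique in Section~\ref{sec:techniques}, that constant, operation, and branch obfuscation each satisfy the condition (the paired Claims~1 and~2, with distinctness supplied by Theorem~1). Your Bayesian derivation is the missing formal bridge between the condition and the word ``secure'': you fix the hypothesis space as the $2^r$ pairs $(\mathcal{F}_{K_r}, \mathcal{K}_r)$ --- with the cardinality justified by Theorem~1, exactly as the theorem's cardinality-$2^r$ clause requires --- posit the uniform prior licensed by the netlist-only threat model, observe that the hypothesis plus the trivial equality for $\mathcal{K}^*_r$ itself forces all $2^r$ likelihoods to a common value $p$, and show by normalization that the posterior collapses to $2^{-r}$, i.e., the observed netlist is statistically independent of the correct key. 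What your route buys is precision about where the threat model enters: the uniform prior is load-bearing, and with side information about the design the posterior need not be uniform --- a caveat the paper concedes only informally (``no information \dots other than any prior knowledge''). What the paper's treatment buys is economy: by reading the theorem as effectively a definition of indistinguishability-based security rather than an implication to be proven, it avoids committing to an explicit attacker model. Your proposal is consistent with the paper's intent and fills a genuine gap in its exposition; the one point to keep tidy is the one you already flag, namely that ``secure'' must be defined as posterior-uniformity (attacker can do no better than guessing) for your conclusion to match the theorem's claim.
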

\noindent This theorem states any locking key $\mathcal{F}_{K_r}$ is equally probable to generate the locked netlist $\mathcal{L}_{\mathcal{K}'_r}$ generated by the locking scheme $\mathcal{O}$, creating a family of Boolean function $\mathcal{F}_{K_r}(X)$ all having the same probability to be the original Boolean function $\mathcal{F}(X)$. We show all our obfuscations satisfy these two claims, providing a security guarantee of $2^r$ under the proposed threat model. This guarantee allows the designer to choose the parameter $r$  to match the technology issues for storing the bits in the final IC. ASSURE will generate a locked design with the corresponding level of security.

\subsubsection{Constant Obfuscation} This obfuscation removes selected constants and moves them into the locking key $K$ as shown in Fig.~\ref{fig:const_obfuscation}. The original function is preserved only when the key provides the correct constant values. {\em\uline{Each constant bit is a {\em hardware-opaque predicate};  the designer knows its value and the circuit operation depends on it}}. 

\noindent
{\em {\bf Example: } Consider the RTL operation {\tt b = a + 5'b01010}. To obfuscate the constant, we add a 5 bit key {\tt K\_c = 5'b01010}. The RTL is rewritten as {\tt b = a + K\_c}.
 The attacker has no extra information and $2^{5}$ possibilities from which to guess the correct value.\hfill$\Box$}

Hiding constant values allows designers to protect sensitive information (e.g., proprietary implementations of digital filters or cryptographic algorithms~\cite{10.1007/978-3-662-43826-8_10}) but also may prevent subsequent logic optimizations (e.g., constant propagation and wire trimming). However, several constants are unuseful and, in some cases, problematic to protect. For example, reset values are set at the beginning of the computation to a value that is usually zero and then assigned with algorithm-related values. 
Also, obfuscating reset polarity or clock sensitivity edges of the processes introduces two problems: incorrect register inferencing, which leads to synthesis issues of the obfuscated designs, and incorrect reset process that easily leads to identify the correct key value. In particular, if we apply obfuscation to the reset processes and the attacker provides an incorrect key value, the IC will be stalling in the reset state when it is supposed to be in normal execution. 
So, we exclude constants related to reset processes and sensitivity values from obfuscation.

{\em\uline{Proof}}. The structure of the obfuscated circuit is independent of the constant and, given an $r$-bit constant, the $2^r$ values are indistinguishable. The attacker cannot get insights on the constants from the circuit structure. ASSURE constant obfuscation satisfies the provable security criteria of logic locking $\mathcal{L}$ under strong adversarial model as defined in Theorem~\ref{theo:strong}. 

Let us consider an RTL design of $m$ inputs and $n$ outputs $R : X \rightarrow Y$, $X \in \{0,1\}^m$ and uses an $r$-bit constant $C^*_{r}$. ASSURE constant obfuscation converts the $r$-bit constant into an $r$-bit key $K^*_r$ as a lock $\mathcal{O}$ and uses it to lock the design $\mathcal{L}_{\mathcal{K}^*_r}$. The obfuscated constant $C_r$ is depicted as follows:
\begin{gather}
    C_r = \mathcal{K}_r
\end{gather}
where, $C_r = C^*_r$ only when $\mathcal{K}_r = \mathcal{K}^*_r = C^*_r$.

\begin{quote}
\leftskip =-20pt
\rightskip =-20pt
\textbf{Claim 1}: Any unlocked constant $C_{K_1}$ and $C_{K_2}$ using r-bit keys $K_1$ and $K_2$ are unique. (Theorem 1)
\end{quote}
\begin{proof}
 $\forall K_1 \neq K_2$, $K_1, K_2 \in \{0,1\}^r$ $\implies C_{K_1}\neq C_{K_2}$. 
\end{proof}

\begin{quote}
\leftskip =-20pt
\rightskip =-20pt
\textbf{Claim 2}: A constant-obfuscated circuit with r-bit key $K$ can be generated from $2^r$ possible constants (each of $r$-bit) with equal probability, i.e. the following holds true.
\begin{gather}     
P[C_r | K = \mathcal{K}^*_r] =  P[C_r | K = K_r] \nonumber\\
    \forall K_r \neq K^*_r ; K_r \in 2^r
\end{gather}
\end{quote}
\begin{proof} 
The probability of choosing $K_r$ is uniform. So, \\
$P[K = \mathcal{K}^*_r] = P[K = \mathcal{K}_r]$, $\forall K_r \neq K^*_r$ \\
$\implies P[C^*_r] = P[C_{r}]$, $C^*_r \neq C_r, \forall C_r \in \{0,1\}^r$.
\end{proof}

Claims 1 and 2 jointly denote that the constant obfuscated by $2^r$ unique constants are indistinguishable and can be unlocked uniquely by the correct $r$-bit key. Constant obfuscation hides the original constants with a security strength of $2^r$.

In Fig.~\ref{fig:constantExtraction}, we show area overhead of \texttt{DES3} and \texttt{RSA}, two CEP benchmarks~\cite{cep}. This experiment shows that constant obfuscation generates indistinguishable circuits. We consider a variable from each benchmark: \texttt{sel\_round} from \texttt{DES3} and \texttt{modulus\_m1\_len} from \texttt{RSA}. We generate different circuits by assigning different constants to the same variable. We synthesize these circuit variants and obtain the area overhead. Fig.~\ref{fig:constantExtraction} shows that every constant value ($c1-c5$) can be reverse engineered from the synthesized circuit since each constant directly maps to unique area overhead. On the contrary, the area overhead of synthesized circuits remain the same after obfuscation, and the obfuscated circuits are indistinguishable, making it  difficult for the attacker to recover the constant.

\subsubsection{Operation Obfuscation} We generate a random key bit and use it to multiplex the operation result with that from another operation sharing the same inputs, as shown in Fig.~\ref{fig:op_obfuscation}. {\em\uline{The mux selector is a {\em hardware opaque predicate} because the designer knows its value and the mux propagates the correct result only for the correct key bit}}. This is similar to that proposed for C- and HLS-level obfuscation~\cite{dac18,8715083}.

\noindent{\em {\bf Example:}  Let us obfuscate RTL operation {\tt c = a + b} with a dummy subtraction. We generate a key bit {\tt k\_o = 1'b0} and rewrite the RTL as {\tt c = k\_o ? a - b : a + b}. The original function is selected for the correct  {\tt k\_o}.}\hfill$\Box$

The ternary operator is a simple representation of the multiplexer, but it may impact code coverage. It introduces extra branches in the circuit, where one of the paths is never activated once the key is provided. To keep the same coverage as the original design, we rewrite the mux selection as  {\tt o = in1 \& k | in2 \& $\mathtt{\sim}$k}.

\noindent
{\em {\bf Example:}  Operation {\tt c = a + b} obfuscated as {\tt c = k\_o ? a - b : a + b} can be written as {\tt c = (a - b)\&\{8\{k\_o\}\} | (a + b)\&\{8\{$\mathtt{\sim}$k\_o}\}\}. This is equivalent to ternary operation without branches, and has the same code coverage.}\hfill$\Box$

Since operations use the same inputs, ASSURE adds a multiplexer at the output with  its select connected  to the key bits. The multiplexer and the additional operator are area overhead. The multiplexer impacts the critical path and the additional operation introduces a delay when it takes more time than the original one.  
We create a pool of alternatives for each operation type. Original and dummy operations are ``balanced'' in complexity to avoid increasing the area and the critical path.
Dummy operations are selected to avoid structures the attacker can easily unlock. Incrementing a signal by one cannot be obfuscated by a multiplication by one, clearly a fake. Dummy operators are also selected to avoid collisions. For example, adding a constant to a signal cannot be obfuscated with a subtract because the wrong operation key bit can activate the circuit when the attacker provides the two's complement of the  constant. 

{\em\uline{Proof}}. 
Consider an RTL design with $m$ inputs and $n$ outputs, with a mapping $\mathcal{F} : X \rightarrow Y$, $X \in \{0,1\}^m$ and with $r$ possible sites for operator obfuscation. ASSURE uses multiplexer-based locking $\mathcal{O}$ with an $r$-bit key $\mathcal{K}^*_r$ to lock the design and generate $\mathcal{L}_{\mathcal{K}^*_r}$. 
\begin{gather}
   \mathcal{L}_{\mathcal{K}^*_r} = \mathcal{F}(X,k_1,k_2,..,k_r) \nonumber\\  = \overline{k_1}\mathcal{F}(X,0,k_2,..,k_r) + k_1 \mathcal{F}(X,1,k_2,..,k_r) \nonumber\\
   = {\mathcal{K}^1_r}\underbrace{\mathcal{F}(X,K=\mathcal{K}^1_r)}_{\mathcal{F}_{K_1}} +  {\mathcal{K}^2_r}\underbrace{\mathcal{F}(X,K=\mathcal{K}^2_r)}_{\mathcal{F}_{K_2}} + .. \nonumber\\
   .. + \mathcal{K}^{2^r}_r\underbrace{\mathcal{F}(X,K = \mathcal{K}^{2^r}_r)}_{\mathcal{F}_{K_{2^r}}}
\end{gather}
where, $\mathcal{F}_{K^*_r}(X) = \mathcal{L}_{\mathcal{K}^*_r}(X,K=\mathcal{K}^*_r)$ ($\mathcal{K}^*_r$ is the $r$-bit key). Each location of operator obfuscation applies output of different operations (one original and another fake) to a multiplexer. The following equation holds true for operator obfuscation.
\begin{gather}
   \mathcal{F}(X,k_1,..,k_i=0,..,k_r) \neq \mathcal{F}(X,k_1,..,k_i=1,..,k_r) \nonumber\\
   \forall i \in [1,r] \label{eq7}
\end{gather}
Secondly, the sites of operation obfuscation are different. The output of multiplexer using any key-bit value at one location is independent of the choice made elsewhere. Given a key $K$, the unlocked function of two circuits will be different if we set same logic value at two different key-bit locations. For an example $K=1101$, if one chooses bit location 2 and 4 and flip them, i.e. $K_1 = 1001, K_2=1100$, then $F_{K_1} \neq F_{K_2}$.
\begin{gather}
   \mathcal{F}(X,k_1,..,k_i=\overline{k_i},..,k_r) \neq \mathcal{F}(X,k_1,..,k_j=\overline{k_j},..,k_r) \nonumber \\ \forall  i,j \in [1,r], i \neq j \label{eq8}
\end{gather}

\begin{quote}
\leftskip =-20pt
\rightskip =-20pt
\textbf{Claim 1}: Any pair of unlocked circuit $F_{K^1_r}$ and $F_{K^2_r}$ using r-bit keys $K^1_r$ and $K^2_r$ on multiplexer-based obfuscated circuit $\mathcal{L}_{\mathcal{K}^*_r}$ are unique. (Theorem 1)
\end{quote}
\begin{proof}
 $\forall K^1_r \neq K^2_r$, $K^1_r, K^2_r \in \{0,1\}^r$ \\
 $\implies$ Hamming distance $(K_1,K_2) \in [1,r]$. \\
 $\implies$ Eq.~\ref{eq7} + Eq.~\ref{eq8}, $F_{K_1} \neq F_{K_2}$
\end{proof}
\begin{quote}
\leftskip =-20pt
\rightskip =-20pt
\textbf{Claim 2}: MUX-based obfuscation with r-bit key $K$ can be generated from $r$ different locations having $2^r$ operations with equal probability, i.e. following condition holds true.
\begin{gather*}     
P[\mathcal{L}_{\mathcal{K}_r} | \mathcal{O}(\mathcal{F}(X), K^*_r)] = P[\mathcal{L}_{\mathcal{K}_r} | \mathcal{O}(\mathcal{F}(X), K^i_r)] \\
    \forall K^i_r \neq K^*_r ; F_{K^i_r} \neq F_{K^*_r}; i \in [1,2^r]
\end{gather*}
\end{quote}

\begin{proof} 
The probability of choosing $K_r$ is uniform. Therefore, \\
P[K = $K^*_r$] = P[K = $K^i_r$], $\forall K^i_r \neq K^*_r$ \\
$\implies P[\mathcal{L}_{\mathcal{K}_r}(X, K = \mathcal{K}^*_r]) = P[\mathcal{L}_{\mathcal{K}_r}(X,K = \mathcal{K}^i_r)]$ \\
$\implies P[\mathcal{F}_{\mathcal{K}^*_r}] = P[\mathcal{F}_{\mathcal{K}^i_r}] = \frac{1}{2^r}$.
\end{proof}

Claims 1 and 2 show that operator obfuscation can generate indistinguishable netlists.

In Fig.~\ref{fig:opObfuscation_eg}, we demonstrate area overhead of the two benchmark circuits \texttt{DES3} and \texttt{RSA} for operator obfuscation supporting our claims generate indistinguishable circuits. We consider a single operation from each benchmark: addition of \textit{auxiliary input} and \texttt{round\_output} from \texttt{DES3}, and subtraction of \texttt{modulus\_m1\_len} from a constant value in \texttt{RSA}. We generate different circuits by replacing the original operators with other operators. After synthesis, area overhead of these variants (Fig.~\ref{fig:opObfuscation_eg}) are unique and can be reverse engineered. On the contrary, the area overhead of synthesized circuits remain the same after obfuscation and so the obfuscated circuits reveals nothing about the original operator. 

\subsubsection{Branch Obfuscation}\label{sec:branch_obf}
To hide which branch is taken after the evaluation of an RTL condition, we obfuscate the test with a key bit as {\tt cond\_res xor k\_b}, as shown in Fig.~\ref{fig:branch_obfuscation}. To maintain semantic equivalence, we negate the condition to reproduce the correct control flow when {\tt k\_b = 1'b1} because the XOR gate inverts the value of {\tt cond\_res}. We apply De Morgan's law to propagate the negation to disguise the identification of the correct condition.
{\em\uline{The resulting predicate is {\em hardware-opaque} because the designer knows which branch is taken but this is unknown without the correct key bit}}. 

{\noindent\em {\bf Example:}  Let {\tt a > b} be the RTL condition to obfuscate with key {\tt k\_b = 1'b1}. We rewrite the condition as {\tt(a <= b)$\wedge$k\_b}, which is equivalent to the original one only for the correct key bit. The attacker has no additional information to infer if the original condition is {\tt >} or {\tt <=}.}\hfill$\Box$

Obfuscating a branch introduces a 1-bit XOR gate, so the area and delay effects are minimal. Similar to constant obfuscation, branch obfuscation is applied only when relevant. For example, we do not obfuscate reset and update processes. We apply the same technique  to  ternary operators. When these operators are RTL multiplexers, this technique thwarts the data propagation between the inputs and the output. The multiplexer propagates the correct value  with the correct key.

{\em\uline{Proof}}. For an $m$ input RTL design, we have a control-flow graph (CFG) $G(V,E)$ having $|V|$ nodes and $|E|$ edges. 
We do a depth-first-traversal of the CFG and order the $r$ conditional nodes in the way they are visited. Let the ordered set of conditional nodes be $V_{CN} = \{v_1,v_2,...v_r\}, V_{CN} \subset V$ ($r = |V_{CN}|$). ASSURE applies XOR-based branch obfuscation to $V_{CN}$ with $r$-bit key $\mathcal{K}^*_r$ as the logic locking scheme $\mathcal{O}$ and generates a locked design $\mathcal{C}_{\mathcal{K}^*_r}$. For example, if $V_{CN} = \{v_1,v_2,v_3,v_4\}$ and $K = 1101$, then $\mathcal{L}(V_{CN}) = \{\overline{v_1},\overline{v_2},v_3,\overline{v_4}\}$. The locked design, post branch-obfuscation is illustrated as follows.
\begin{gather}
    \mathcal{L}_{\mathcal{K}^*_r}(G(V,E),K)= \mathcal{O}(G(V,E),K^*_r) = \nonumber \\  
    G(\mathcal{O}(V_{CN},K^*_r)\cup (V \setminus V_{CN}),E) = \nonumber \\  
    G((V_{CN} \oplus K^*_r)\cup (V \setminus V_{CN}),E)
\end{gather}
where $G(V, E) = \mathcal{L}_{\mathcal{K}^*_r}(G(V,E),K = \mathcal{K}^*_r)$.

\begin{table*}[t]
\caption{Characteristics of the input RTL benchmarks.}\label{tab:benchmarks}
\scriptsize
\vspace{-6pt}
\begin{tabularx}{\linewidth}{@{}L{1.4cm}@{} L{1.5cm}@{} @{}R{1cm} || @{}R{0.9cm}@{} @{}R{0.9cm}@{} @{}R{1.2cm} | @{}R{1.1cm} || R{1.2cm}@{} R{1.2cm}@{} R{1.2cm}@{} R{1.1cm}@{} R{1.1cm} | R{1.4cm}@{} R{1.4cm}@{}}
\toprule
{\bf Suite} & {\bf Design} & {\bf Modules} & {\bf Const} & {\bf Ops} & {\bf Branches} & {\bf Tot Bits} &  {\bf Comb cells} & {\bf Seq cells} & {\bf Buf cells} & {\bf Inv cells} & {\bf \# nets} & {\bf Area ($\mu m^2$)} & {\bf Delay ($ns$)}\\
\midrule
\multirow{9}{*}{CEP}	&	{\tt AES}	&	657	&	102,403	&	429	&	1	&	819,726	&  127,667 & 8,502 & 506 & 21,812 & 136,493 & 42,854.69 & 136.75\\
	&	{\tt DES3}	&	11	&	4	&	3	&	775	&	898	&  2,076 & 135 & 128 & 368 & 2,448 & 736.96 & 192.28\\
	&	{\tt DFT}	&	211	&	447	&	151	&	132	&	8,697	&  118,201 & 38,521 & 9,552 & 41,320 & 158,807 & 81,865.94 & 336.72\\
	&	{\tt FIR}	&	5	&	10	&	24	&	0	&	344	&  820 & 439 & 49 & 225 & 1,704 & 1,129.36 & 377.76\\
	&	{\tt IDFT}	&	211	&	447	&	151	&	132	&	8,697	&  118,154 & 38,525 & 9,576 & 41,305 & 158,722 & 81,821.90 & 333.59\\
	&	{\tt IIR}	&	5	&	19	&	43	&	0	&	651	&  1,378 & 648 & 72 & 367 & 2,621 & 1,679.72 & 464.82\\
	&	{\tt MD5}	&	2	&	150	&	50	&	1	&	4,533	&  4,682 & 269 & 168 & 923 & 5,756 & 1,840.15 & 791.53\\
	&	{\tt RSA}	&	15	&	243	&	35	&	13	&	1,942	&  222,026 & 57,987 & 21,808 & 66,088 & 280,222 & 134,907.05 & 386.55\\
	&	{\tt SHA256}	&	3	&	159	&	36	&	2	&	4,992  & 5,574 & 1,040 & 243 & 1,024 & 7,532 & 3,201.07 & 440.67\\
\midrule													
\multirow{5}{*}{IWLS}	&	{\tt MEM\_CTRL}	&	27	&	492	&	442	&	160	&	2,096 &  4,007 & 1,051 & 120 & 1,136 & 5,183 & 2,373.35 & 260.72	\\
	&	{\tt SASC}	&	3	&	35	&	27	&	17	&	126	&  367 & 116 & 0 & 125 & 500 & 238.24 & 84.40 \\
	&	{\tt SIMPLE\_SPI	} &	3	&	55	&	34	&	15	&	288	&  476 & 130 & 2 & 145 & 623 & 282.57 & 119.42\\
	&	{\tt SS\_PCM}	&	1	&	5	&	10	&	3	&	24	&  231 & 87 & 1 & 94 & 338 & 168.29 & 90.51\\
	&	{\tt USB\_PHY}	&	3	&	67	&	70	&	34	&	223	&  287 & 98 & 0 & 85 & 401 & 194.15 & 71.91\\
\midrule													
\multirow{3}{*}{OpenCores}	&	{\tt ETHMAC}	&	66	&	487	&	1,217	&	218	&	3,849 &  34,783 & 10,545 & 2,195 & 12,021 & 45,441 &  22,453.76 & 190.44	\\
	&	{\tt I2C\_SLAVE} 	&	4	&	104	&	14	&	11	&	269	&  466 & 125 & 0 & 126 & 596 & 160.28 & 125.44\\
	&	{\tt VGA\_LCD}	&	16	&	123	&	310	&	56	&	885	&  54,614 & 17,052 & 4,921 & 19,228 & 71,766 & 36,095.90 & 224.67\\
\midrule													
\multirow{3}{*}{OpenROAD}	&	{\tt ARIANE\_ID} 	&	4	&	3,498	&	385	&	723	&	4,606 &  1,993 & 378 & 96 & 559 & 2,615 & 980.97 & 225.48\\
	&	{\tt GCD}	&	11	&	15	&	4	&	12	&	496	&  168 & 34 & 3 & 32 & 253 & 100.91 & 161.87\\
	&	{\tt IBEX}	&	15	&	14,740	&	5,815	&	6,330	&	26,885	&  12,161 & 1,864 & 978 & 2,965 & 14,379 & 5,758.84 & 538.10\\
\bottomrule
\end{tabularx}
\end{table*}

\begin{quote}
\leftskip =-20pt
\rightskip =-20pt
\textbf{Claim 1}: The unlocked CFGs $\mathcal{L}_{\mathcal{K}^*_r}(G(V,E), K_1)$ and $\mathcal{L}_{\mathcal{K}^*_r}(G(V,E), K_2$ using $r$-bit keys $K_1$ and $K_2$, respectively, on the XOR-based encrypted CFG $\mathcal{L}_{\mathcal{K}^*_r}(G(V,E),K)$ are unique.
\end{quote}

\begin{proof}
$\forall K_1 \neq K_2$, $K_1, K_2 \in \{0,1\}^r$ \\
$\implies K_1 \oplus V_{CN} \neq K_2 \oplus V_{CN}$ 
$\implies V^1_{CN} \neq V^2_{K_2}$. \\
$\implies G(V^1_{CN},E) \neq G(V^2_{CN},E)$.
\end{proof}

\begin{quote}
\leftskip =-20pt
\rightskip =-20pt
\textbf{Claim 2}: The obfuscated CFG $\mathcal{L}_{\mathcal{K}^*_r}(G(V,E),K)$ can be generated from $2^r$ possible combination of condition statuses with equal probability, i.e. the following condition holds true.
\begin{gather}     
P[\mathcal{L}_{\mathcal{K}'_r}(G(V,E),K) | G((V_{CN} \oplus K^*_r) \cup (V \setminus V_{CN}),E)] = \nonumber\\ 
P[\mathcal{L}_{\mathcal{K}'_r}(G(V,E),K) | G((V_{CN} \oplus K*_r) \cup (V \setminus V_{CN}),E)] \nonumber \\
    \forall K_r \neq K^*_r ; V^r_{CN} \neq V_{CN}
\end{gather}
\end{quote}
\begin{proof}
The probability of choosing $K_r$ is uniform. So, \\
$P[K = \mathcal{K}^*_r] = P[K = \mathcal{K}_r$], $\forall \mathcal{K}_r \neq \mathcal{K}^*_r, \mathcal{K}_r \in 2^r$ \\
$\implies P[(V_{CN}\oplus \mathcal{K}^*_r) \oplus \mathcal{K}^*_r] = P[(V_{CN}\oplus \mathcal{K}^*_r) \oplus \mathcal{K}_r]$ \\
$\implies P[V_{CN}] = P[V^r_{CN}],V_{CN} \neq V^r_{CN},$ \\
$V^r_{CN} = \{p_1,p_2,..,p_i,..,p_r\}$, where $p_i \in \{v_i,\overline{v_i}\}$.
\end{proof}
Combining claims 1 and 2 shows that the encrypted CFGs are indistinguishable for a family of $2^r$ possible designs.

In Fig.~\ref{fig:branchObfuscation_eg}, we report the area overhead of the two benchmark circuits \texttt{DES3} and \texttt{RSA} in case of branch obfuscation showing empirical evidence of our claim that obfuscated circuits are indistinguishable. We identify five conditions from each benchmark and generated five different variants, flipping each condition at a time. After synthesizing the circuits, we observed that area overhead is uniquely mapped to each variant of the design. The conditions in the CFG can be easily reverse engineered from the synthesized circuit and the flow of design can be unlocked. On the contrary, the area overhead of synthesized circuits remain the same after obfuscation, indicating the obfuscated circuits reveal no information about the control-flow of the circuit.

\section{Experimental Validation of ASSURE}
\subsection{Experimental Setup}

We implemented ASSURE as a Verilog$\rightarrow$Verilog tool that leverages Pyverilog~\cite{Takamaeda:2015:ARC:Pyverilog}, a Python-based 
hardware design processing toolkit to manipulate RTL Verilog. Pyverilog parses the input Verilog descriptions and creates the design AST. ASSURE then manipulates the AST. Pyverilog then reproduces the output Verilog description ready for logic synthesis.

\begin{figure*}
     \includegraphics[width=\textwidth]{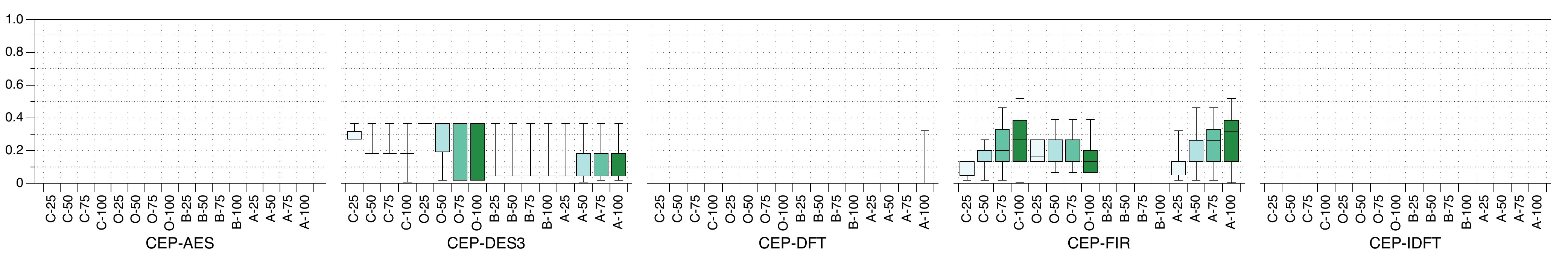}

     \includegraphics[width=\textwidth]{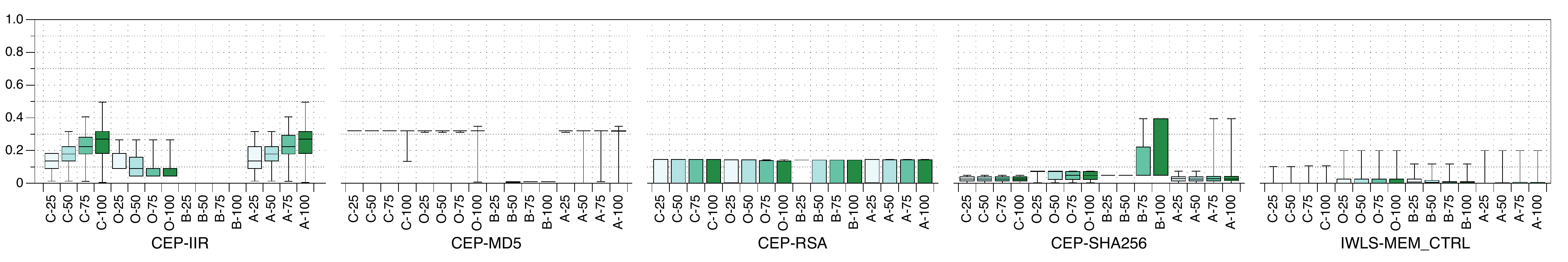}

     \includegraphics[width=\textwidth]{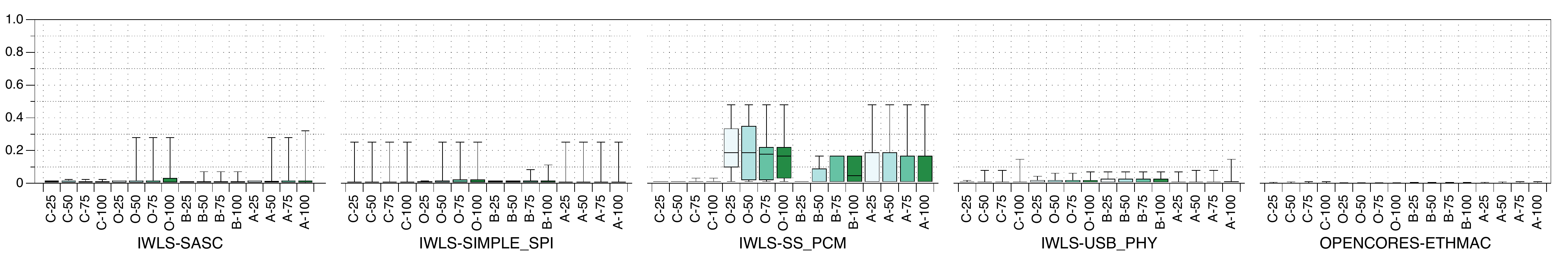}
     
     \includegraphics[width=\textwidth]{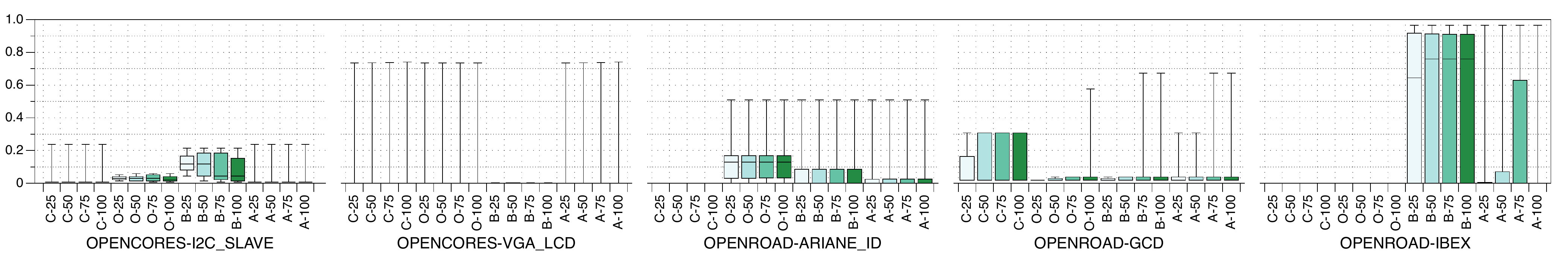}
     \caption{Verification failure metric in \textsc{KEY-EFFECT} experiments.}
     \label{fig:visibility_results}
\end{figure*}

We used ASSURE to protect several Verilog designs from different sources\footnote{Supporting VHDL and SystemVerilog only requires proper HDL parsers.}: the MIT-LL Common Evaluation Platform (CEP) platform~\cite{cep}, the OpenROAD project~\cite{openroad_dac2019}, and the OpenCores repository~\cite{opencores}. Four CEP benchmarks ({\tt DCT}, {\tt IDCT}, {\tt FIR}, {\tt IIR}) are created with Spiral, a hardware generator~\cite{Pueschel:11}). 
Table~\ref{tab:benchmarks} shows the characteristics of these benchmarks in terms of number of hardware modules, constants, operations, and branches. This data also characterizes the functionality that needs obfuscation. The benchmarks are much larger than those used by the gate-level logic locking experiments by the community~\cite{llcarxiv}. Differently from~\cite{dac18}, ASSURE does not require any modifications to synthesis tools and applies to pre-existing industrial designs, processing the Verilog RTL descriptions with no modifications. 

We analyzed the ASSURE in terms of security (Section~\ref{sec:security} and Section~\ref{sec:attacks}) and  overhead (Section~\ref{sec:cost}). For each benchmark, we created obfuscated variants using all techniques ({\tt ALL}) or one of constant ({\tt CONST}), operation {\tt (OP)}, and branch {\tt (BRANCH)} obfuscations. We repeat experiments by constraining the number of key bits available: 25\%, 50\%, 75\% or 100\% and reported in Table~\ref{tab:benchmarks}. The resulting design is then identified by a combination of its name, the configuration, and the number of key bits. For example, {\tt DFT-ALL-25} indicates obfuscation of the {\tt DFT} benchmark, where all three obfuscations are applied using 2,175 bits for obfuscation (25\% of 8,697) as follows: 38 for operations (25\% of 151), 33 for branches (25\% of 132) and the rest (2,104) for constants.

\subsection{Correctness and Key Effects}\label{sec:security}

We first apply formal verification on the locked design against the unprotected design with a twofold goal. First, we show that, when the correct key $\mathcal{K}^*_r$ is used, the unlocked circuit matches the original. We label this experiment as \textsc{correctness}. Second, we show that flipping each single key bit induces at least a failing point (i.e., no {\em collision}). This experiment demonstrates that each key bit has an effect on the functionality of the circuit. We label this experiment as \textsc{key effect}. We show no other key can activate the same IC, i.e., all other circuits ($K_r \neq K^*_r$) are not exact copies of the original designs. In this experiment, we also aim at quantifying how the obfuscation techniques affect the IC functionality when the attacker provides incorrect keys. With formal verification, we focus on IC functionality rather than IC results. We compute the {\em verification failure} metric as:
\begin{equation}
F = \frac{1}{K}\cdot\sum^{K}_{i=1}\frac{n(FailingPoints)_i}{n(TotalPoints)} 
\end{equation}
This metric is the average fraction of verification points that do not match when testing with different wrong keys. We experimented using Synopsys Formality N-2017.09-SP3.

\subsubsection{Correctness} We apply ASSURE several times, each time with a random key to obfuscate operations and branches (constants are always extracted in the same way. We formally verified these designs against the original ones. In all experiments, {\em\uline{ASSURE generates circuits that match the original design when using the correct key}}. 

\subsubsection{Key Effect} Given a design obfuscated with an $r$-bit key, we performed $r$ experiments where in each of them we flipped only one key bit with respect to the correct key. In all cases, {\em\uline{formal verification identifies at least one failing point, showing that an incorrect key always alters the circuit functionality}}. Varying the locking key has no effect since the failure is induced by the flipped bit (from correct to incorrect) and not its value.
Fig.~\ref{fig:visibility_results} shows the verification failure metrics for each experiment. Results are not reported for {\tt FIR-BRANCH-*} and {\tt IIR-BRANCH-*} because they have no branches. {\tt AES}, {\tt DFT}, {\tt IDFT}, and {\tt OPENCORES-ETHMAC} benchmarks have low values ($\sim$10$^{-5}$) since they have many verification points and only a small part is failing. Operations and constants vitally impact the design as obfuscating them induces more failing points. Increasing the number of obfuscated operations reduces the metric. Since obfuscation is performed using a depth-first analysis, the first bits correspond to operations closer to the inputs. As the analysis proceeds, obfuscation is closer to the output and more internal points match. 

\begin{figure}
    \centering
    \includegraphics[height=3.2cm]{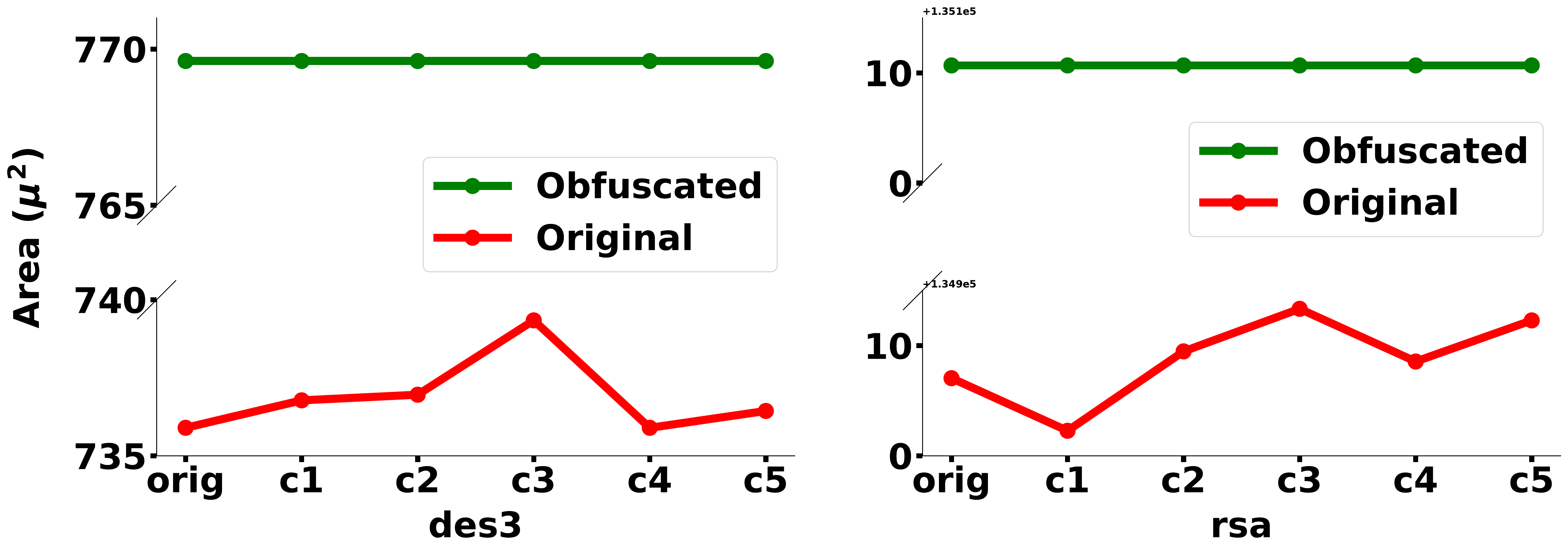}
    \caption{Area of original and obfuscated variants of  \texttt{DES3} and \texttt{RSA} when synthesized with different constants ($c1-c5$).}
    \label{fig:constantExtraction}
\end{figure}

This experiment allowed us to identify design practices that lead to inefficient obfuscations or even collisions. In {\tt DFT}, one-bit signals were initialized with integer values 0/1. Verilog allows this syntax and signals are trimmed by logic synthesis. A naive RTL constant analysis would pick 32 bits for obfuscating a single-bit. Since only the least significant bit impacts the circuit function, flipping the other 31 bits would lead to a collision. So, we extended ASSURE AST analysis to match the constant sizes with those of the target signals.

\begin{figure}[t]
    \centering
    \includegraphics[height=3.2cm]{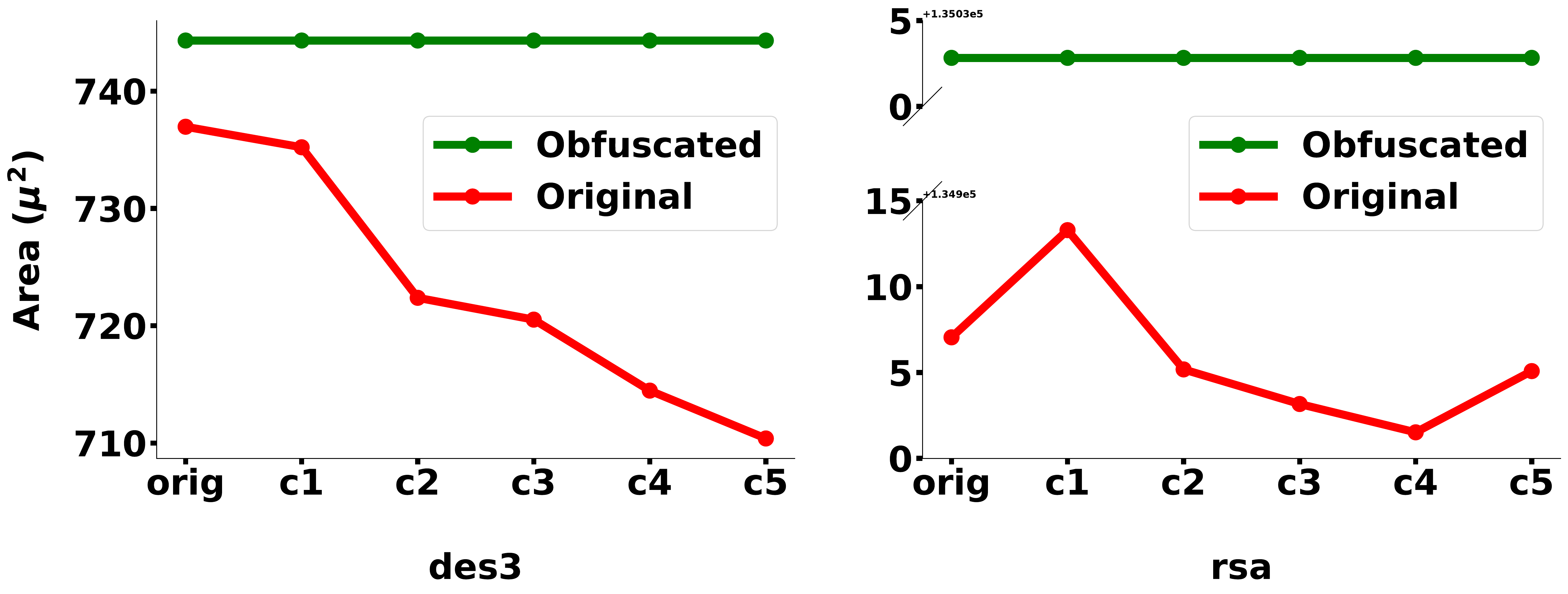}
    \caption{Area of original and obfuscated variants of benchmarks \texttt{DES3} and \texttt{RSA} using different operators in the statement.}
    \label{fig:opObfuscation_eg}
\end{figure}

\begin{table*}[!tb]
\caption{Security assessment of ASSURE obfuscation against redundancy attacks~\cite{8714955} (without oracle) and KC2 attacks~\cite{neos} (with oracle). {\em Failed} denotes attack failure due to non-existence of untestable faults for redundancy attacks and due to unsolvable constraints or incorrect key generation for KC2 attacks. {\em Timeout} denotes tool termination after \texttt{96} hours without returning the key. CFG1, CFG2, CFG3, and CFG4 corresponds to 25\%, 50\%, 75\%, and 100\% of maximum possible key-bit obfuscation, respectively. (X/Y) in redundancy attack indicate X keybits are correct out of Y keybits recovered by the redundancy attack.}
\resizebox{2\columnwidth}{!}{
\begin{tabular}{ccccccccccccccc}
\hline
\multirow{3}{*}{\textbf{\begin{tabular}[c]{@{}c@{}}Bench\\ mark\end{tabular}}} & \multirow{3}{*}{\textbf{\begin{tabular}[c]{@{}c@{}}Obf.\\ Type\end{tabular}}} & \multirow{3}{*}{\textbf{\begin{tabular}[c]{@{}c@{}}Attack \\ with \\oracle\\ access?\end{tabular}}} & \multicolumn{12}{c}{\textbf{Obfuscation configuration}} \\
    &     &   & \multicolumn{3}{c}{\textbf{CFG1}}  & \multicolumn{3}{c}{\textbf{CFG2}} & \multicolumn{3}{c}{\textbf{CFG3}}  & \multicolumn{3}{c}{\textbf{CFG4}}  \\ \cline{4-15} 
    &     &    & \textbf{\begin{tabular}[c]{@{}c@{}}Key\\ (bits)\end{tabular}} & \textbf{\begin{tabular}[c]{@{}c@{}}Recovered\\ (bits)\end{tabular}} & \textbf{\begin{tabular}[c]{@{}c@{}}Time \\ (s)\end{tabular}} & \textbf{\begin{tabular}[c]{@{}c@{}}Key\\ (bits)\end{tabular}} & \textbf{\begin{tabular}[c]{@{}c@{}}Recovered\\ (bits)\end{tabular}} & \textbf{\begin{tabular}[c]{@{}c@{}}Time \\ (s)\end{tabular}} & \textbf{\begin{tabular}[c]{@{}c@{}}Key\\ (bits)\end{tabular}} & \textbf{\begin{tabular}[c]{@{}c@{}}Recovered\\ (bits)\end{tabular}} & \textbf{\begin{tabular}[c]{@{}c@{}}Time \\ (s)\end{tabular}} & \textbf{\begin{tabular}[c]{@{}c@{}}Key\\ (bits)\end{tabular}} & \textbf{\begin{tabular}[c]{@{}c@{}}Recovered\\ (bits)\end{tabular}} & \textbf{\begin{tabular}[c]{@{}c@{}}Time\\ (s)\end{tabular}} \\ \hline
\multirow{4}{*}{{\tt DES3}} 
& \multirow{2}{*}{All} & no & \multirow{2}{*}{225}  & 20/34 & 5,655 & \multirow{2}{*}{450} & 31/54  & 20,860  
& \multirow{2}{*}{675} & 0  & timeout & \multirow{2}{*}{900} & 0 & timeout \\
&                      & yes    &   & 225 & 13,447 &   & 450  & 16,216 &     & 0  & failed &  & 0  & timeout  \\
& \multirow{2}{*}{Constant} & no & \multirow{2}{*}{30} & 0/8 & 264                                            & \multirow{2}{*}{60} & 0/8 & 968 & \multirow{2}{*}{90} & 0/10 & 1,456 & \multirow{2}{*}{120} & 0/10 & 2,575 \\
&                      & yes    &   & 30  & 2,324 &    & 60 & 5,398 &   & 0  & failed &      & 120  & 8,476 \\ \hline
\multirow{4}{*}{{\tt FIR}}                                                  & \multirow{2}{*}{All}                                                                     & no                                                                                 & \multirow{2}{*}{86}                                           & 4/32                                                                & 3,269                                                         & \multirow{2}{*}{164}                                          & 7/45                                                                & 26,045                                                        & \multirow{2}{*}{250}                                          & 12/67                                                               & 39,025                                                        & \multirow{2}{*}{336}                                          & 0                                                                   & timeout                                                     \\
                                                                               &                                                                                          & yes                                                                                &                                                               & 0                                                                   & 1,372                                                         &                                                               & 0                                                                   & failed                                                       &                                                               & 0                                                                   & 5,665                                                         &                                                               & 0                                                                   & timeout                                                     \\
                                                                               & \multirow{2}{*}{Constant}                                                                & no                                                                                 & \multirow{2}{*}{80}                                           & 0/25                                                                & 2,989                                                         & \multirow{2}{*}{152}                                          & 0/26                                                                & 22,697                                                        & \multirow{2}{*}{232}                                          & 0/52                                                                & 33,156                                                        & \multirow{2}{*}{312}                                          & 0                                                                   & timeout                                                     \\
                                                                               &                                                                                          & yes                                                                                &                                                               & 0                                                                   & 1,189                                                         &                                                               & 0                                                                   & failed                                                       &                                                               & 0                                                                   & 5,145                                                         &                                                               & 0                                                                   & timeout                                                     \\ \hline
\multirow{4}{*}{{\tt MD5}}                                                  & \multirow{2}{*}{All}                                                                     & no                                                                                 & \multirow{2}{*}{1,135}                                         & 0                                                                   & timeout                                                      & \multirow{2}{*}{2,267}                                         & 0                                                                   & timeout                                                      & \multirow{2}{*}{3,401}                                         & 0                                                                   & timeout                                                      & \multirow{2}{*}{4,533}                                         & 0                                                                   & timeout                                                     \\
                                                                               &                                                                                          & yes                                                                                &                                                               & 0                                                                   & failed                                                       &                                                               & 0                                                                   & timeout                                                      &                                                               & 0                                                                   & timeout                                                      &                                                               & 0                                                                   & timeout                                                     \\
                                                                               & \multirow{2}{*}{Constant}                                                                & no                                                                                 & \multirow{2}{*}{1,121}                                         & 0                                                                   & timeout                                                      & \multirow{2}{*}{2,241}                                         & 0                                                                   & timeout                                                      & \multirow{2}{*}{3,362}                                         & 0                                                                   & timeout                                                      & \multirow{2}{*}{4,482}                                         & 0                                                                   & timeout                                                     \\
                                                                               &                                                                                          & yes                                                                                &                                                               & 0                                                                   & failed                                                       &                                                               & 0                                                                   & timeout                                                      &                                                               & 0                                                                   & timeout                                                      &                                                               & 0                                                                   & timeout                                                     \\ \hline
\multirow{4}{*}{{\tt SHA256}}                                               & \multirow{2}{*}{All}                                                                     & no                                                                                 & \multirow{2}{*}{1,250}                                         & 0                                                                   & timeout                                                      & \multirow{2}{*}{2,496}                                         & 0                                                                   & timeout                                                      & \multirow{2}{*}{3,745}                                         & 0                                                                   & timeout                                                      & \multirow{2}{*}{4,992}                                         & 0                                                                   & timeout                                                     \\
                                                                               &                                                                                          & yes                                                                                &                                                               & 0                                                                   & failed                                                       &                                                               & 0                                                                   & failed                                                       &                                                               & 0                                                                   & timeout                                                      &                                                               & 0                                                                   & timeout                                                     \\
                                                                               & \multirow{2}{*}{Constant}                                                                & no                                                                                 & \multirow{2}{*}{1,239}                                         & 0                                                                   & timeout                                                      & \multirow{2}{*}{2,477}                                         & 0                                                                   & timeout                                                      & \multirow{2}{*}{3,716}                                         & 0                                                                   & timeout                                                      & \multirow{2}{*}{4,954}                                         & 0                                                                   & timeout                                                     \\
                                                                               &                                                                                          & yes                                                                                &                                                               & 0                                                                   & failed                                                       &                                                               & 0                                                                   & failed                                                       &                                                               & 0                                                                   & timeout                                                      &                                                               & 0                                                                   & timeout                                                     \\ \hline
\multirow{4}{*}{{\tt SS\_PCM}}                                              & \multirow{2}{*}{All}                                                                     & no                                                                                 & \multirow{2}{*}{7}                                            & 0/4                                                                 & 2                                                            & \multirow{2}{*}{13}                                           & 0/4                                                                 & 3                                                            & \multirow{2}{*}{18}                                           & 1/5                                                                 & 5                                                            & \multirow{2}{*}{24}                                           & 1/5                                                                 & 7                                                           \\
                                                                               &                                                                                          & yes                                                                                &                                                               & 7                                                                   & 843                                                          &                                                               & 13                                                                  & 170                                                          &                                                               & 18                                                                  & 1,308                                                         &                                                               & 0                                                                   & 6,052                                                        \\
                                                                               & \multirow{2}{*}{Constant}                                                                & no                                                                                 & \multirow{2}{*}{3}                                            & 0/0                                                                 & 2                                                            & \multirow{2}{*}{6}                                            & 0/0                                                                 & 2                                                            & \multirow{2}{*}{8}                                            & 0/0                                                                 & 3                                                            & \multirow{2}{*}{11}                                           & 0/0                                                                 & 5                                                           \\
                                                                               &                                                                                          & yes                                                                                &                                                               & 3                                                                   & 289                                                          &                                                               & 6                                                                   & 310                                                          &                                                               & 8                                                                   & 784                                                          &                                                               & 0                                                                   & 1897                                                        \\ \hline
\multirow{4}{*}{{\tt GCD}}                                                  & \multirow{2}{*}{All}                                                                     & no                                                                                 & \multirow{2}{*}{11}                                           & 3/11                                                                & 8                                                            & \multirow{2}{*}{23}                                           & 5/15                                                                & 8                                                            & \multirow{2}{*}{34}                                           & 7/17                                                                & 12                                                           & \multirow{2}{*}{47}                                           & 9/16                                                                & 14                                                          \\
                                                                               &                                                                                          & yes   &     & 0  & 8                                                            &                                                               &                                                                  0   & 15                                                           &                                                               &                                                                   0  & 15                                                           &                                                               &                                                                 0    & 21                                                          \\
                                                                               & \multirow{2}{*}{Constant}                                                                & no                                                                                 & \multirow{2}{*}{7}                                            & 0/0                                                                 & 6                                                            & \multirow{2}{*}{15}                                           & 0/4                                                                 & 7                                                            & \multirow{2}{*}{22}                                           & 0/8                                                                 & 11                                                           & \multirow{2}{*}{31}                                           & 0/8                                                                 & 14                                                          \\
                                                                               &                                                                                          & yes                                                                                &                                                               &                                                                 0    & 7                                                            &                                                               &                                                               0      & 7                                                            &                                                               &                                                                 0    & 14                                                           &                                                               &                                                                  0   & 19                                                          \\ \hline
\multirow{4}{*}{{\begin{tabular}[c]{@{}c@{}}\tt USB\_\\ \tt PHY\end{tabular}}}  & \multirow{2}{*}{All}                                                                     & no                                                                                 & \multirow{2}{*}{57}                                           & 15/21                                                               & 17                                                           & \multirow{2}{*}{112}                                          & 0                                                                   & failed                                                       & \multirow{2}{*}{163}                                          & 34/75                                                               & 105                                                          & \multirow{2}{*}{223}                                          & 47/86                                                               & 184                                                         \\
                                                                               &                                                                                          & yes                                                                                &                                                               & 0                                                                   & 521                                                          &                                                               & 0                                                                   & 548                                                          &                                                               & 0                                                                   & 898                                                          &                                                               & 0                                                                   & 360                                                         \\
                                                                               & \multirow{2}{*}{Constant}                                                                & no                                                                                 & \multirow{2}{*}{30}                                           & 0/0                                                                 & 14                                                           & \multirow{2}{*}{60}                                           & 0                                                                   & failed                                                       & \multirow{2}{*}{89}                                           & 0/5                                                                 & 97                                                           & \multirow{2}{*}{119}                                          & 0/10                                                                & 152                                                         \\
                                                                               &                                                                                          & yes                                                                                &                                                               & 0                                                                   & 510                                                          &                                                               & 0                                                                   & 522                                                          &                                                               & 0                                                                   & 524                                                          &                                                               & 0                                                                   & 347                                                         \\ \hline
\end{tabular}
}
\label{tab2:attacksOnAssure}
\end{table*}

\subsection{Resilience Against Locking Attacks}\label{sec:attacks}

We outlined provable security guarantees that $n$ obfuscation bits induce $2^n$  RTL designs with uniform probability. We now discuss resilience to known locking attacks.

\subsubsection{Resynthesis Attacks}\label{sec:resynthesis}
Massad \textit{et al.}~\cite{DBLP:journals/corr/MassadZGT17} showed that greedy heuristics can recover the key of an obfuscated gate-level netlist. Performing re-synthesis with an incorrect key may trigger additional optimizations that produce large redundancy in the circuit. Similarly, Li \textit{et al.}~\cite{8714955} propose an attack using concepts from VLSI testing. Incorrect key results in large logic redundancy and most of stuck-at faults become untestable. A correctly unlocked circuit, however, has high testability. ASSURE obfuscates RTL design before synthesis. Since the obfuscated RTL is equally likely to be generated from $2^n$ designs, logic synthesis using different keys on a reverse-engineered obfuscated netlist reveals no information about the original netlist. Hence, the area overhead for the correct and incorrect keys are in same range (see Figs.~\ref{fig:constantExtraction}, \ref{fig:opObfuscation_eg} and \ref{fig:branchObfuscation_eg}).

\begin{figure}
    \centering
    \includegraphics[height=3.2cm]{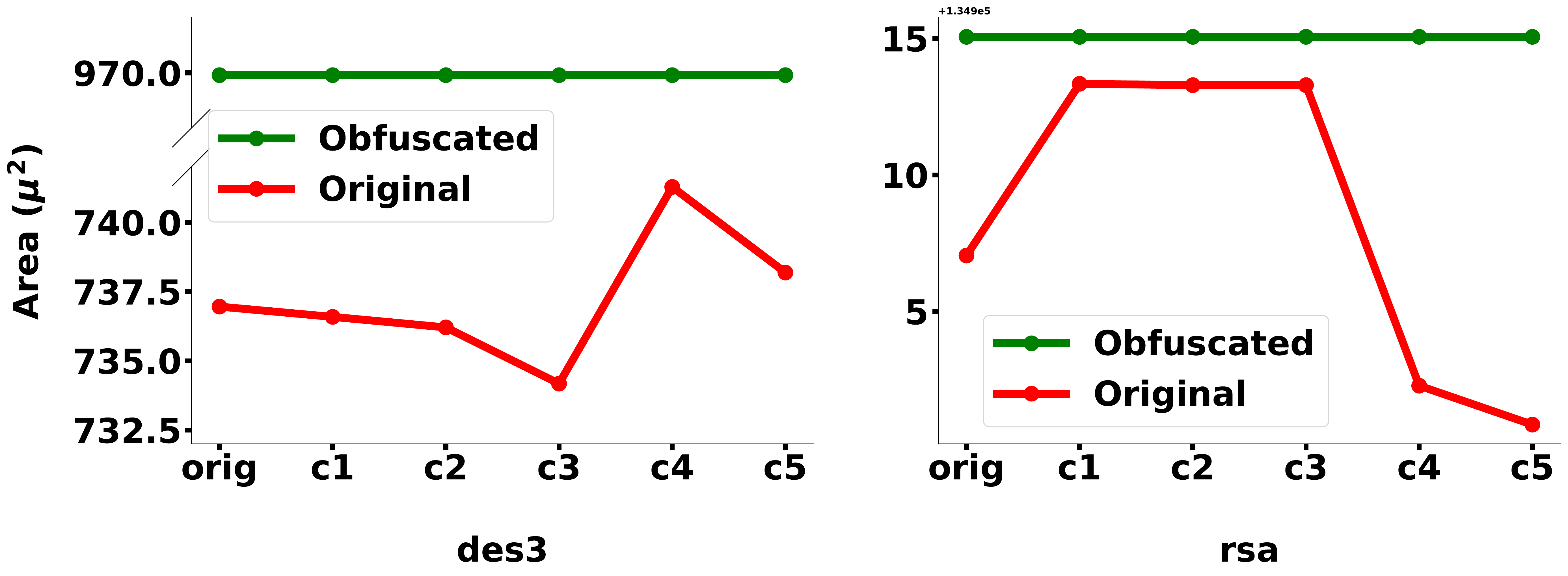}
    \caption{Area of original and obfuscated variants of benchmarks \texttt{DES3} and \texttt{RSA} in case of different CFG flows.}
    \label{fig:branchObfuscation_eg}
\end{figure}

\subsubsection{ML-guided Attacks}\label{sec:mlguided_attacks} 
Chakraborty \textit{et al.}~\cite{sail,surf} proposed oracle-less attacks on logic obfuscation based on the idea that obfuscation techniques insert XOR/XNOR gates that leave structural traces. The key gates are inserted before synthesis with known technology library and synthesis process (algorithms and tools). Since the effect of logic optimizations are local and the optimization rules are deterministic, one can recover the original function by launching an ML-guided removal attack on the obfuscated RTL design. In ASSURE, the obfuscation logic does not depend solely on the insertion of XOR/XNOR gates. For example, in branch obfuscation, we perform also logic inversion of the condition instead of simply adding a XOR gate followed by a NOT when the corresponding key bit is $1$. Recovering the original RTL from obfuscated RTL is hard (see claim 2 of ASSURE branch obfuscation proof in Section~\ref{sec:branch_obf}). Also, recovering extracted constants from an obfuscated design is impossible since the obfuscated circuit does not contain any information on the constant value.

\subsubsection{Redundancy and KC2 Attacks}\label{sec:other_attacks}
We analyze the strength of ASSURE's obfuscation by running oracle-less redundancy attacks~\cite{8714955}. Redundancy attacks decipher the key bits by identifying redundant lines in the synthesized netlist with incorrect key bits. KC2 is an improved version of SAT-based attacks incrementally unrolling a sequential circuit to recover the key. 
Even if ASSURE is not designed to protect against oracle-guided attacks, we evaluated its performance on KC2~\cite{neos}, a popular oracle-guided attack. 
We have run both attacks with a timeout of 96 hours and 50~GB of memory for each attack run. Table~\ref{tab2:attacksOnAssure} summarizes the results of both attacks on selected ASSURE obfuscated designs. In particular, we apply the attacks to benchmarks that we can safely convert into the format required by the attack tools.
We perform the attacks after applying all obfuscations ({\tt ALL}) or after applying only constant obfuscation ({\tt Constant}).
Constant obfuscation successfully thwarts all redundancy attacks showing this is the most powerful obfuscation. Indeed, on benchmarks like {\tt DES3}, {\tt FIR}, {\tt SS\_PCM}, and {\tt USB\_PHY}, redundancy attacks recovered some key bits. These results indicate that combining different obfuscation techniques is not 100\% secure compared to stand-alone obfuscation.
Even if we focused on the netlist-only threat model, it is interesting to evaluate the effects of oracle-guided attacks. KC2 attacks were able to recover the correct keys only for {\tt DES3} and {\tt SS\_PCM} benchmarks. 
In all other cases, KC2 claimed to recover certain key bits. However, the equivalence checking performed by ABC within the tool showed the functionality unlocked with those bits is not equivalent with the original one, i.e. the key is incorrect.

\begin{figure*}
     \centering
     \includegraphics[width=\textwidth]{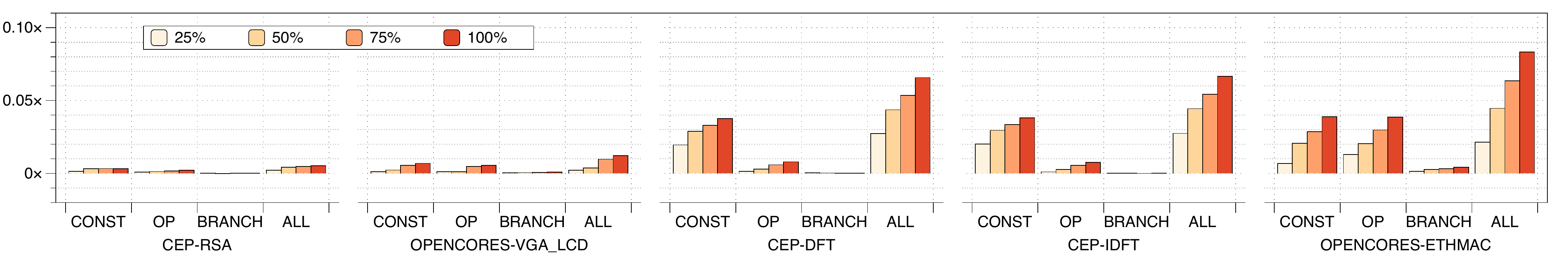}

     \includegraphics[width=\textwidth]{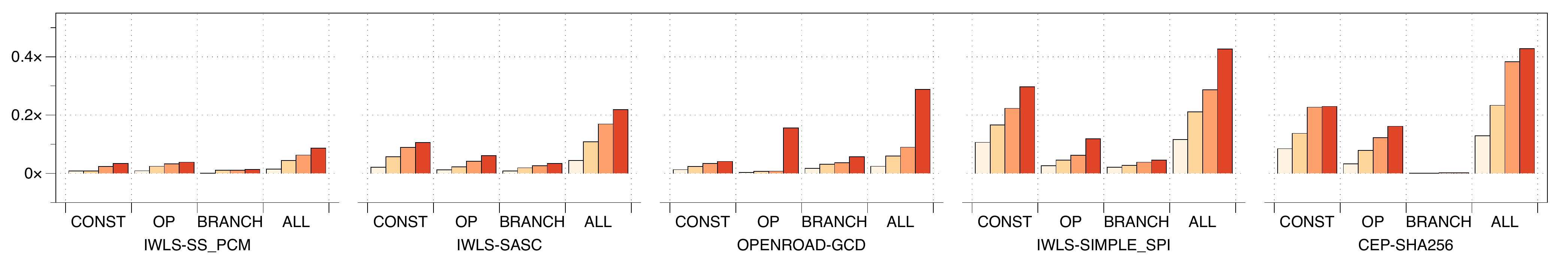}

     \includegraphics[width=\textwidth]{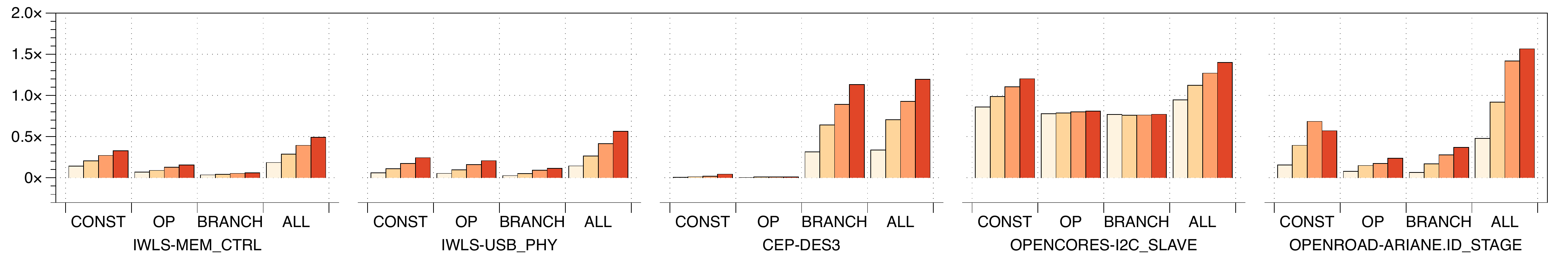}
     
     \includegraphics[width=\textwidth]{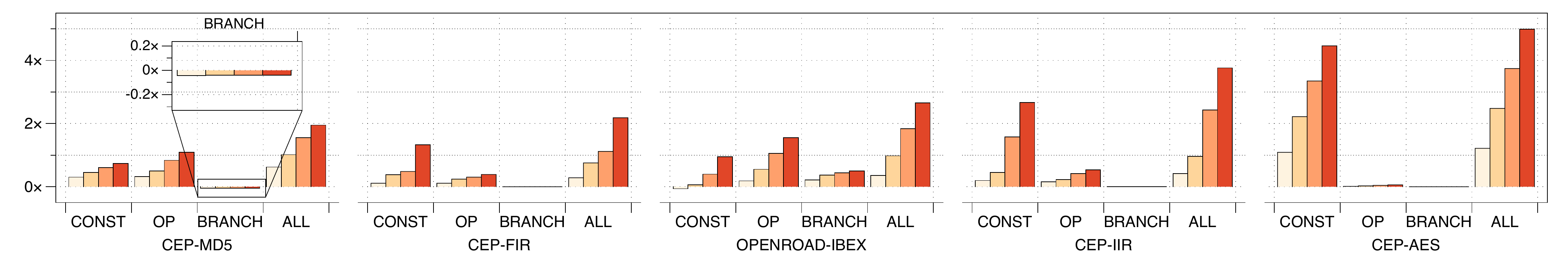}
\caption{\vspace{-4pt}Area overhead for ASSURE obfuscation. Benchmarks are presented in increasing order of total overhead.}
     \label{fig:area_overhead_key_bit}
\end{figure*}

\begin{figure*}
     \centering
     \includegraphics[width=0.98\columnwidth]{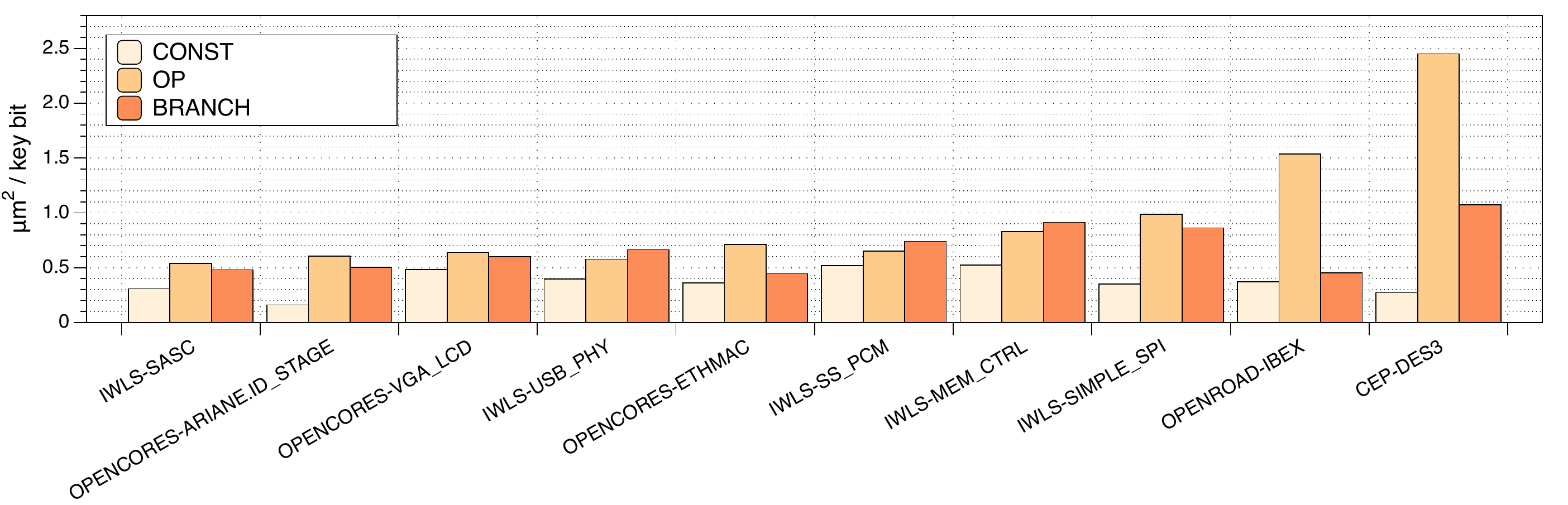}
\hfill     
     \includegraphics[width=0.98\columnwidth]{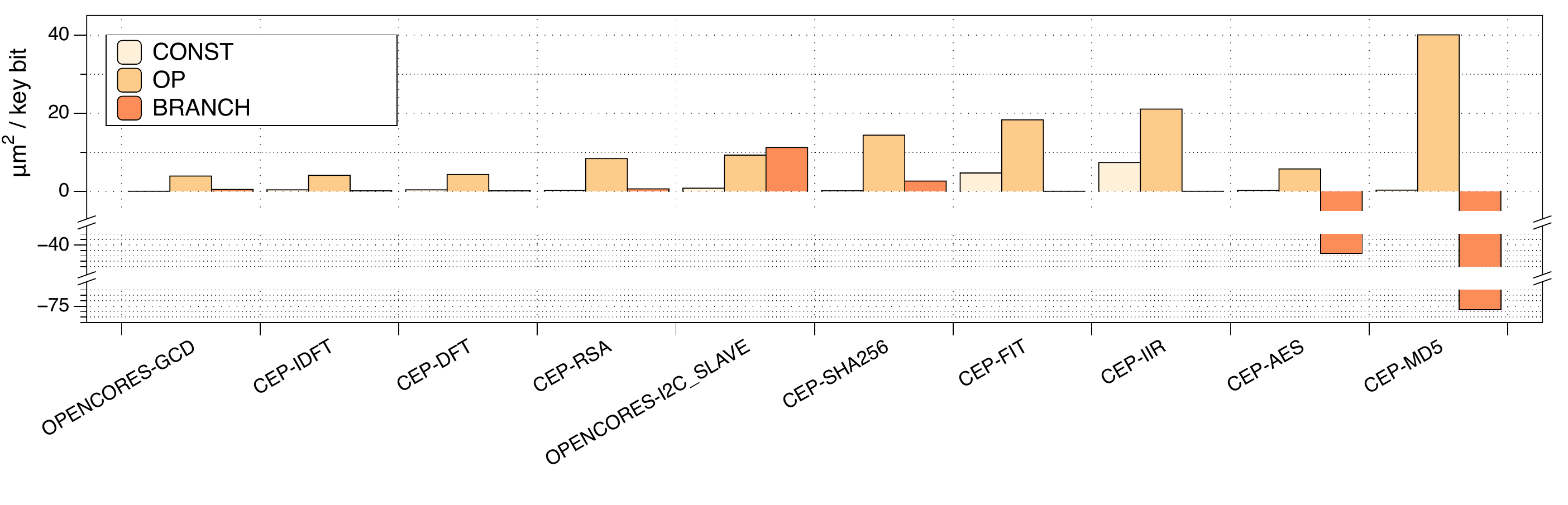}
\vspace{-4pt}\caption{Area overhead per key bit for ASSURE obfuscation. Benchmarks are presented in increasing order of total overhead.}
     \label{fig:area_overhead}
\end{figure*}

\begin{figure*}
     \centering
     \includegraphics[width=\textwidth]{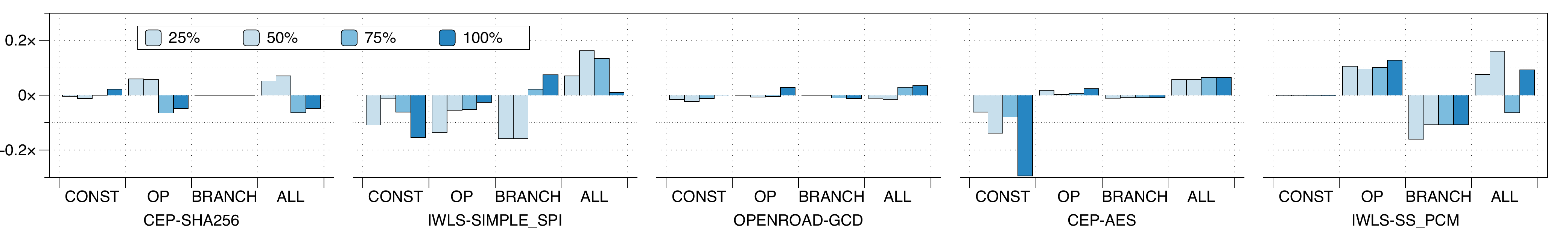}
     
     \includegraphics[width=\textwidth]{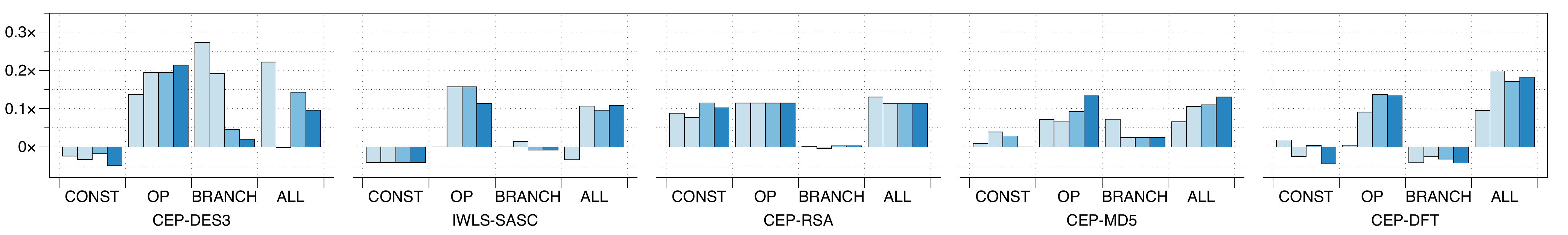}

     \includegraphics[width=\textwidth]{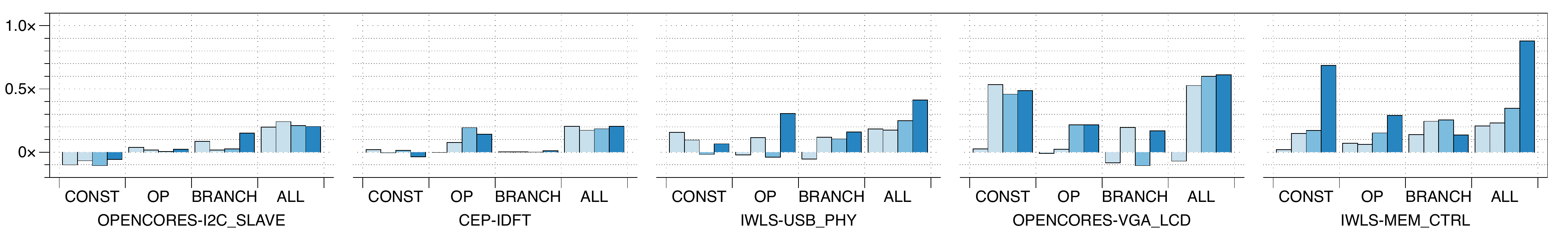}     

     \includegraphics[width=\textwidth]{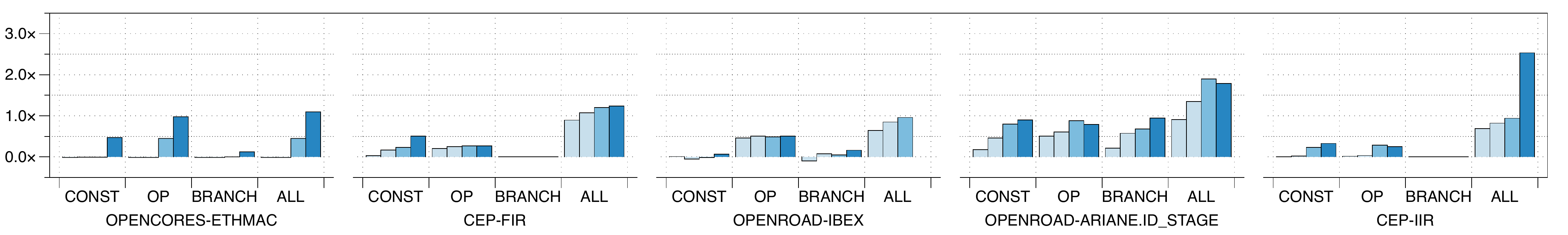}
\caption{     \vspace{-4pt}Timing overhead for ASSURE obfuscation. Benchmarks are presented in increasing  order of total overhead.}
     \label{fig:timing_overhead}
\end{figure*}

\subsection{Synthesis Overhead}\label{sec:cost}
We did logic synthesis using the Synopsys Design Compiler J-2018.04-SP5 targeting the Nangate 15nm ASIC technology at standard operating conditions ($25^{\degree}$C). We evaluated the area overhead and critical-path delay degradation relative  to the original design. While our goal is to protect the IP functionality and not to optimize the  resources, designs with lower cost are preferred.  {\em\uline{ASSURE generates correct designs with no combinational loops}}. Constant obfuscation extracts the values that are used as the key and no extra logic. Operation obfuscation multiplexes results of original and dummy operations. Branch obfuscation adds XOR to the conditions. 

\subsubsection{Area overhead}\label{sec:area_cost}
Table~\ref{tab:benchmarks} reports the results of the original design -- the number of cells in the netlists, the area (in $\mu m^2$) and the critical-path delay (in $ns$). Fig.~\ref{fig:area_overhead} reports the area overhead of all obfuscations with respect to the original designs. The three techniques are independent and so, {\tt ALL} results are the aggregate of the three techniques. 
Constant obfuscation produces an average overhead in the range 18\% ({\tt *-CONST-25}) to 80\% ({\tt *-CONST-100}). The maximum overhead is about 450\% for {\tt AES-CONST-100}, which has the most number obfuscated constants. {\em\uline{ASSURE removes hard-coded constants from the circuit, preventing logic optimizations like constant propagation}}. In {\tt AES}, all S-Box modules are optimized as logic in the original circuit. This optimization is not possible anymore when the constants are provided as inputs. However, we showed this obfuscation provides maximum protection since the constants are semantically removed from the circuit. The average operation obfuscation overhead is in the range 9\% ({\tt *-OP-25}) to 25\% ({\tt *-OP-100}). {\tt IBEX-OP-100} has the maximum overhead of 155\% since it has the most operations. Branch obfuscation produces a smaller average overhead, in the range 6\% ({\tt *-BRANCH-25}) to 14\% ({\tt *-BRANCH-100}) with a maximum overhead of 113\% for {\tt DES-BRANCH-100}. This benchmark has the largest proportion of branches relative to other elements. {\tt MD5} results in savings ($\sim$4\%) when we apply branch obfuscation ({\tt MD5-BRANCH-*}). The branch conditions help  pick elements from the library that lower area overhead.

The real impact of ASSURE depends on how many elements are obfuscated in each configuration. So, we computed the {\em area overhead per key bit} as the area overhead of a configuration divided by the number of key bits used for its obfuscation and report it in Fig.~\ref{fig:area_overhead_key_bit}.  
In most cases, {\em\uline{operation obfuscation has the largest impact, followed by branches and then constants}}. This impact is larger for data-intensive benchmarks, like CEP filters ({\tt DFT}, {\tt IDFT}, {\tt FIR}, and {\tt IIR}). Constants usually require more obfuscation bits, so the impact per bit is smaller. Each obfuscated operation introduces a new functional unit and multiplexer per key bit. {\tt MD5} has a large negative impact when obfuscating the branches justifying the area reduction when we apply only branch obfuscation ({\tt MD5-BRANCH-*}). On the contrary, even if {\tt AES} was the benchmark with the largest overhead (and many more bits), its overhead per key bit is comparable with the others. We repeated the experiments several times and we observed minimal variants with different locking keys.

To conclude {\em\uline{the area overhead is related to the design characteristics and to the number of key bits}}. The former determine the impact of  ASSURE, while the latter determine the total amount of overhead. {\em\uline{The overhead depends on the design, the techniques, and the number of key bits and not on the values of the locking key}}.

\subsubsection{Timing overhead}\label{sec:timing_cost}
Fig.~\ref{fig:timing_overhead} shows the overhead introduced by the ASSURE obfuscation logic on the critical path when targeting area optimization. {\em\uline{Timing overhead is application dependent with similar results across the different techniques. The overhead is larger when the obfuscated elements are on the critical path}}. This is relevant in data-intensive (with many operations) and control-intensive (with control branches on critical path) designs. In most benchmarks, the timing overhead is $<$20\%. Constants have a positive impact on the overhead (see {\tt AES} and {\tt DES3}). The obfuscated designs can generally achieve the same performance as the original ones, limiting the impact on the IC design flow.

\section{Discussion and Concluding Remarks}\label{sec:discussion}

We presented ASSURE, an RTL locking framework against an {\em\uline{untrusted foundry that has no access to an unlocked functional chip}}. ASSURE operates on the Verilog RTL description and is fully compatible with industrial EDA flows. We discuss the major contributions in the form of Q\&A.

\noindent $\blacktriangleright$ {\bf Which threat model are you considering? How is it relevant for my design?} We consider the netlist-only threat model where the attacker has no access to an activated chip. This model is relevant especially for an untrusted foundry with low-volume IC production. 

\noindent $\blacktriangleright$ {\bf Why should I use an RTL approach instead of existing gate-level techniques?}
ASSURE hides the essential semantics (constants, operations, and control-flow branches) in a way that is {\em\uline{indistinguishible and provably secure against attackers with no prior knowledge of the IP function}}. Most of the semantic information (e.g., constants) cannot be protected at the gate level because synthesis tools embed it into the netlist.

\noindent $\blacktriangleright$ {\bf Is ASSURE secure?} In our experimental analysis with formal verification and logic synthesis EDA tools, we show the {\em\uline{circuits can be unlocked only with the correct key}} and {\em\uline{obfuscating the design closer to the inputs induces more verification failures}}. Also, {\em\uline{ASSURE can thwart oracle-less attacks that can recover key bits even in case of SAT-resilient protections}}, showing the two approaches must be combined.

\noindent $\blacktriangleright$ {\bf What is the overhead?} ASSURE obfuscations introduce {\em\uline{area overhead that depends on the obfuscation techniques and is proportional to the number of key bits}}. In case of constants, obfuscation prevent logic optimizations, like constant propagation, while {\em\uline{operation obfuscation has the largest overhead per key bit}}. The {\em\uline{key values have no impact on the obfuscation results}}. ASSURE has no impact on the clock cycles but only on the critical path delay in a way that depends on where the obfuscation is applied. The designers can use these guidelines to apply obfuscation on their design.

\section*{Acknowledments}

The authors would like to thank Benjamin Tan (NYU) and Jitendra Bhandari (NYU) for their support in implementing locking attacks. The research is supported in part by NSF Award (\# 1526405), ONR Award (\# N00014-18-1-2058), NSF CAREER Award (\# 1553419), the NYU Center for Cybersecurity (\url{cyber.nyu.edu}), and the NYUAD Center for Cybersecurity (\url{sites.nyuad.nyu.edu/ccs-ad}).

\bibliographystyle{IEEEtran}
\bibliography{assure}

\begin{IEEEbiography}[{\includegraphics[width=1in,height=1.25in,clip,keepaspectratio]{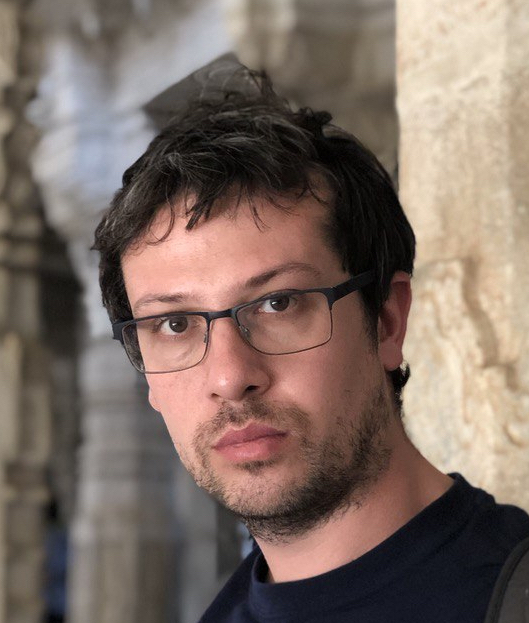}}]{Christian Pilato} 
is a Tenure-Track Assistant Professor at Politecnico di Milano. He was a Post-doc Research Scientist at Columbia University (2013-2016) and Università della Svizzera italiana (2016-2018). He was also a Visiting Researcher at New York University, TU Delft, and Chalmers University of Technology. He has a Ph.D. in Information Technology from Politecnico di Milano (2011). His research interests include high-level synthesis, reconfigurable systems and system-on-chip architectures, with emphasis on memory and security aspects. He served as program chair of EUC 2014 and is currently serving in the program committees of many conferences on EDA, CAD, embedded systems, and reconfigurable architectures (DAC, ICCAD, DATE, CASES, FPL, ICCD, etc.) He is a Senior Member of IEEE and ACM, and a Member of HiPEAC.
\end{IEEEbiography}

\begin{IEEEbiography}[{\includegraphics[width=1in,height=1.25in,clip,keepaspectratio]{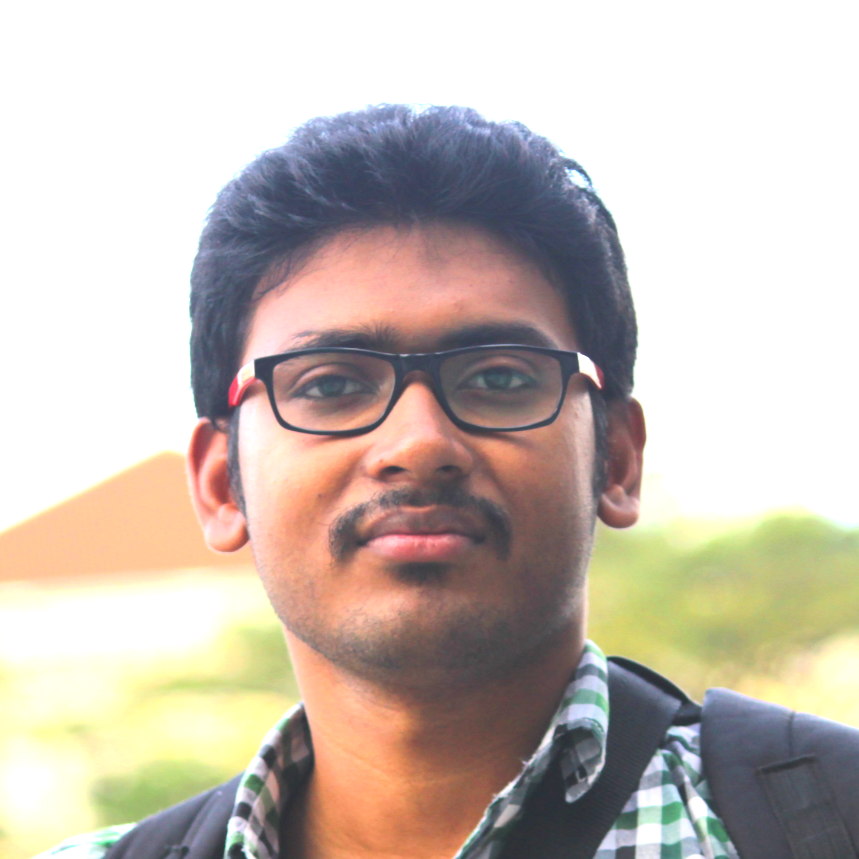}}]{Animesh Basak Chowdhury} 
received his MS in Computer Science from Indian Statistical Institute in 2016. Currently, he is a doctoral candidate at the NYU Centre for Cybersecurity. His research interests include Secure Electronics Design Automation (EDA), machine learning and SoC security. Prior to joining the Ph.D. program, he spent three years as a researcher at Tata Research Development and Design Centre (TRDDC), India, where he was primarily working in the area of formal verification and security testing. He has won several awards and recognition in International Software Verification and Testing Competitions (SV-COMP, TEST COMP, and RERS-Challenge).

\end{IEEEbiography}

\begin{IEEEbiography}
[{\includegraphics[width=1in,height=1.25in,clip,keepaspectratio]{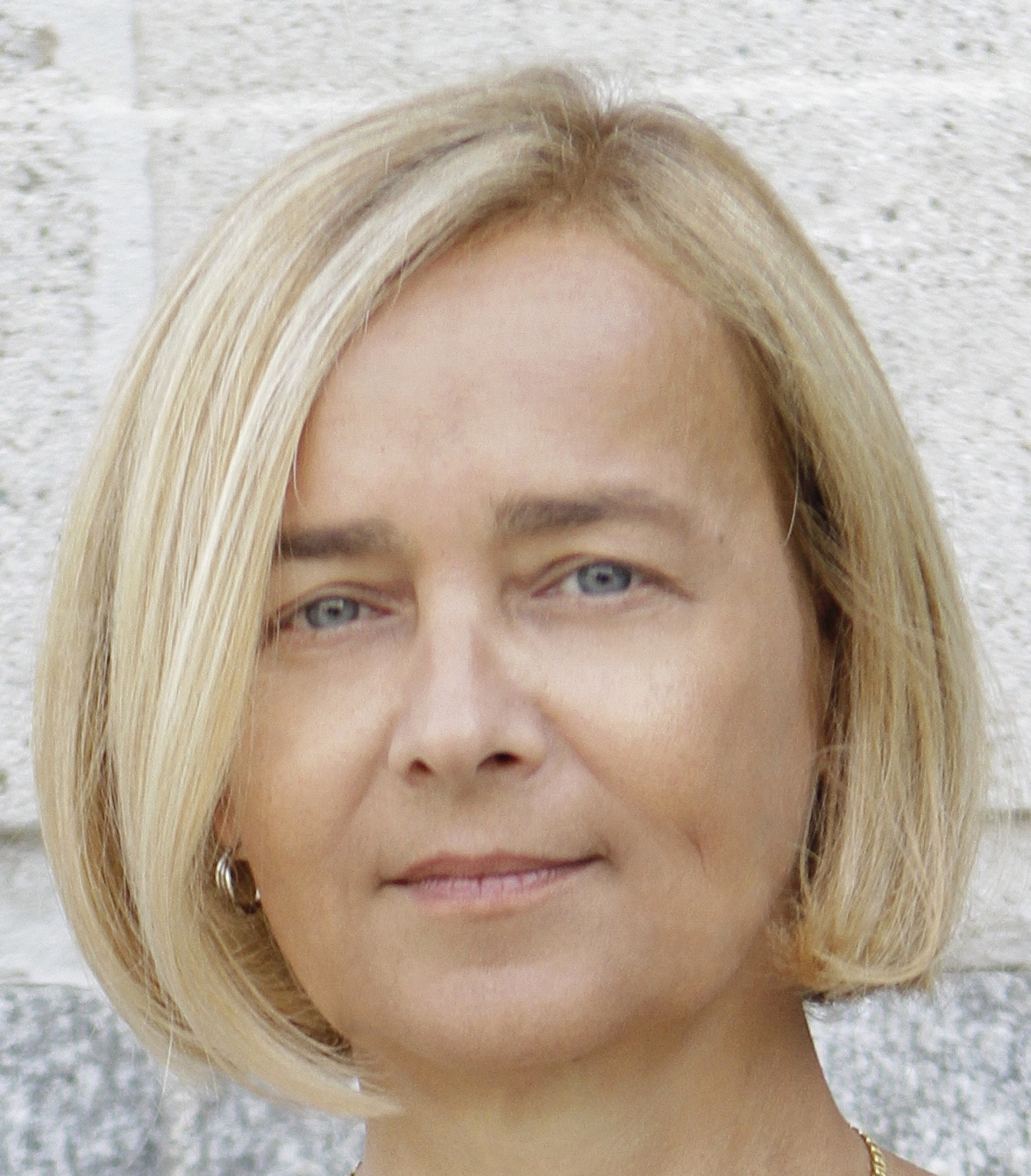}}]{Donatella Sciuto} 
received the  Laurea (Ms) in Electronic Engineering from Politecnico di Milano and the PhD in Electrical and Computer Engineering from the University of Colorado, Boulder, and the MBA from Bocconi University. 
She is currently the Executive Vice Rector of the Politecnico di Milano and Full Professor in Computer Science and Engineering. 
Her main research interests cover the methodologies for the design of embedded systems and multicore systems considering performance, power and security metrics. More recently she has been involved in managing and developing research projects in the area of smart cities and in the application of new ICT technologies to different application fields.  
She has published over 300 scientific papers.
She is a Fellow of IEEE for her contributions in embedded system design.  
She has served as Vice-President of Finance and then President of the IEEE Council of Electronic Design Automation from 2009 to 2013 and she serves in different capacities in IEEE Awards Committees, in scientific boards of IEEE journals and conferences. 
\end{IEEEbiography}

\begin{IEEEbiography}[{\includegraphics[width=1in,height=1.25in,clip,keepaspectratio]{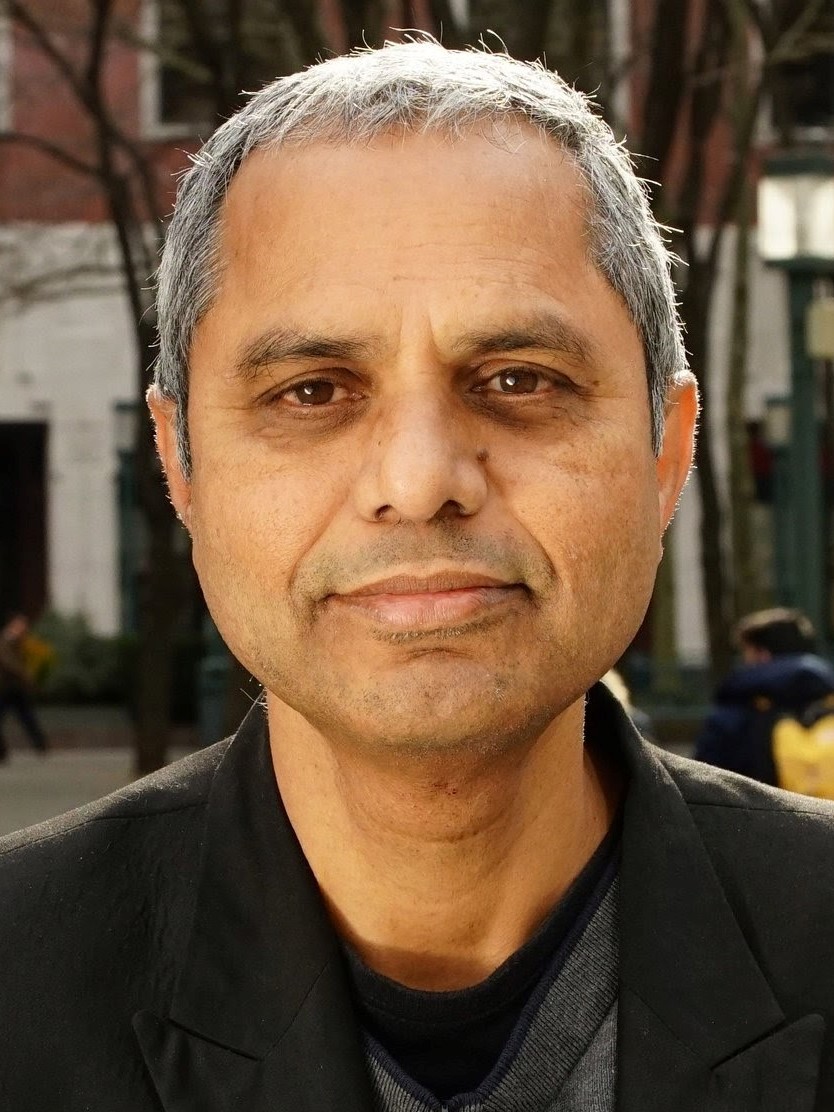}}]{Ramesh Karri} 
  is a Professor of ECE at New York University. He co-directs the NYU Center 
for Cyber Security (http://cyber.nyu.edu). He founded the Embedded Systems Challenge (https://csaw.engineering.nyu.edu/esc), the annual 
red team blue team event. He co-founded Trust-Hub (http://trust-hub.org). 
Ramesh Karri has a Ph.D. in Computer Science and 
Engineering, from the UC San Diego and a B.E in ECE from Andhra University. His 
research and education activities in hardware cybersecurity include trustworthy 
ICs; processors and cyber-physical systems; security-aware computer-aided 
design, test, verification, validation, and reliability; nano meets security; 
hardware security competitions, benchmarks, and metrics; biochip security; 
additive manufacturing security. He  published over 250 articles in leading 
journals and conference proceedings. Karri's work on hardware cybersecurity 
received best paper nominations (ICCD 2015 and DFTS 2015) and awards (ACM TODAES 
2018, ITC 2014, CCS 2013, DFTS 2013 and VLSI Design 2012). He received the 
Humboldt Fellowship and the NSF CAREER Award. He is the editor-in-chief of ACM JETC and serve(d)s  
on the editorial boards of IEEE and ACM Transactions (TIFS, TCAD, 
TODAES, ESL, D\&T, JETC). He was an IEEE Computer Society Distinguished 
Visitor (2013-2015). He served on the Executive Committee of the IEEE/ACM 
DAC leading the Security\@DAC initiative (2014-2017). 
He served as program/general chair of conferences and serves on program committees. 
He is a Fellow of the IEEE for leadership and contributions to Trustworthy Hardware.
\end{IEEEbiography}

\begin{IEEEbiography}
[{\includegraphics[width=1in,height=1.25in,clip,keepaspectratio]{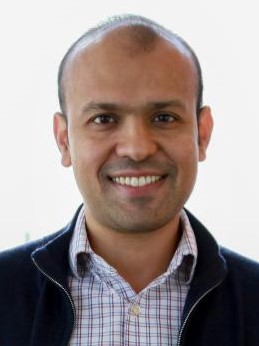}}]{Siddharth Garg} 
  received his Ph.D. degree in Electrical and Computer Engineering from 
Carnegie Mellon University in 2009, and a B.Tech. degree in Electrical 
Engineering from the Indian Institute of Technology Madras. He joined NYU in 
Fall 2014 as an Assistant Professor, and prior to that, was an Assistant 
Professor at the University of Waterloo from 2010-2014. His general research 
interests are in computer engineering, and more particularly in secure, reliable 
and energy-efficient computing. In 2016, Siddharth was listed in Popular 
Science Magazine's annual list of "Brilliant 10" researchers. Siddharth has 
received the NSF CAREER Award (2015), and paper awards at the IEEE Symposium on 
Security and Privacy (S\&P) 2016, USENIX Security Symposium 2013, at the 
Semiconductor Research Consortium TECHCON in 2010, and the International 
Symposium on Quality in Electronic Design (ISQED) in 2009. Siddharth also 
received the Angel G. Jordan Award from ECE department of Carnegie Mellon 
University for outstanding thesis contributions and service to the community. He 
serves on the technical program committee of several top conferences in the area 
of computer engineering and computer hardware, and has served as a reviewer for 
several IEEE and ACM journals.
\end{IEEEbiography}

\end{document}